\theoremstyle{plain}
\newtheorem{theo}{Theorem}
\newtheorem{lemme}{Lemma}
\newtheorem{prop}{Proposition}
\newcommand{\llb}{\llbracket}
\newcommand{\rrb}{\rrbracket}
\DeclarePairedDelimiter\floor{\lfloor}{\rfloor}
\DeclarePairedDelimiter\abs{\lvert}{\rvert}
\newcommand{\mybinom}[2]{\Bigl(\begin{array}{@{}c@{}}#1\\#2\end{array}\Bigr)}
\newcommand{\Z}{\mathbb{Z}}
\newcommand{\F}{\mathbb{F}}
\newcommand{\Fm}{\F_m}
\def\rg#1{{\color{black}#1}}
\newcommand{\cale}{{\mathcal{E}}}
\title{Periodic autocorrelation of sequences}
\author{Fran\c cois Rodier}
\address{Aix Marseille Univ, CNRS, Centrale Marseille, I2M, Marseille, France} 
\curraddr{}
\email{francois.rodier@univ-amu.fr}
\thanks{}
\author{Florian Caullery}
\address{Qualcomm Office, Sophia-Antipolis, France}
\curraddr{}
\email{fcauller@qti.qualcomm.com}
\thanks{}
\author{Eric F\'erard}
\address{\'Equipe GAATI, Universit\'e de la Polyn\'esie Fran\c{c}aise}
\curraddr{}
\email{eric.ferard@upf.pf}
\thanks{}
\begin{document}

\small

\date{}

\begin{abstract}
  The autocorrelation of a sequence is a useful criterion, among all, of resistance to cryptographic attacks.
  The behavior of the autocorrelations of random Boolean functions (studied by Florian Caullery, Eric F\'erard and Fran\c cois Rodier \cite{CR})
  shows that they are concentrated around a point.
  We show that the same is true for the evaluation of the periodic autocorrelations of random binary sequences. 
\end{abstract}

\keywords {periodic autocorrelation, random sequence, resistance to cryptographic attacks.}

\subjclass[2020]{Primary 11K45; Secondary 60C10, 68R15.}

\maketitle

\section{Introduction}

In this article, we are interested in random  sequences of rational integers.
The most interesting case occurs when the entries are just \(-1\) or \(1\), in which case we call the sequence binary. More precisely we are interested \rg{in} the periodic autocorrelation of the sequences that we are going to define now.

Let \(m\) be a prime number.
Let \(\F_m = \{ 0, 1, \ldots, m-1\}\) and \(\F_m^* = \{ 1, \ldots, m-1\}\) where the elements are taken modulo \(m\).
Let \(S_m = \{ s_0, s_1, \ldots, s_{m-1} \} \in \{-1, 1\}^m\).
We endow the set \(\{ -1, 1 \}^m\) with a uniform probability distribution, so that the \(s_i\)'s are independent and
equally likely to take the value \(-1\) or \(1\).
We define
\[C_u(S_m) = \sum_{i \in \F_m} s_is_{i+u}\]
and the periodic autocorrelation of the sequence \(S_m\)
\[C(S_m) = \max_{u \in \F_m^*} \bigl| C_u(S_m) \bigr|.\]

We find an evaluation of the mean of the periodic autocorrelations of the random sequences.
It happens to be the point of accumulation of the periodic autocorrelations of random  sequences.

We therefore want to prove the following theorem.

\begin{theo}\label{theo}~{}
  \begin{enumerate}
  \item[a)] The expectation (denoted $\cale$) of the periodic autocorrelation has the following limit:
    \[\frac{ \rg{{\cale}(C(S_m))} }{ \sqrt{m\log m} } \longrightarrow \sqrt{2}\]
    as \rg{the primes} \(m \to +\infty\).
  \item[b)] As \rg{the primes} \(m\rightarrow \infty\), in probability
    \[\frac{C_m}{ \sqrt{m \log m}} \rightarrow \sqrt 2.\]
  \end{enumerate}
\end{theo}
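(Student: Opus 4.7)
The plan is to prove (b) first and then derive (a) from it by controlling the tails. The key preparatory observation, since $m$ is prime, is that for each $u \in \F_m^*$ the map $k \mapsto ku$ is a permutation of $\F_m$; setting $y_k = s_{ku}$ gives i.i.d.\ Rademacher $(y_k)_{k \in \F_m}$ with
\[
C_u(S_m) \;=\; \sum_{k \in \F_m} y_k\, y_{k+1}.
\]
The substitution $z_k = y_k y_{k+1}$ is two-to-one from $\{-1,1\}^m$ onto the hyperplane $\prod_k z_k = 1$, so $C_u$ has the same law as $\sum_{k=0}^{m-2} z_k + \prod_{k=0}^{m-2} z_k$, with $z_0, \ldots, z_{m-2}$ independent uniform in $\{-1, 1\}$. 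In particular $C_u$ differs by at most $1$ from a Rademacher sum of length $m-1$, so Hoeffding's inequality yields $\Pr(|C_u| \geq t) \leq 2\exp\bigl(-(t-1)^2/(2(m-1))\bigr)$.

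\emph{Upper bound.} Plugging $t = (1+\epsilon)\sqrt{2m\log m}$ into the previous tail bound and taking a union bound over the $m-1$ values of $u$ yields
\[
\Pr\bigl( C(S_m) \geq (1+\epsilon)\sqrt{2m\log m} \bigr) \;=\; O\bigl(m^{-\epsilon + o(1)}\bigr) \;\longrightarrow\; 0,
\]
which proves the upper bound in (b).

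\emph{Lower bound via the second moment method.} With $t = (1-\epsilon)\sqrt{2m\log m}$, let $N = \#\{u \in \F_m^* : |C_u| \geq t\}$. The local central limit theorem applied to the Rademacher sum of the first paragraph gives $\Pr(|C_u| \geq t) = m^{-(1-\epsilon)^2 + o(1)}$, whence $\mathcal{E}(N) = m^{2\epsilon - \epsilon^2 + o(1)} \to \infty$. A direct expansion shows $\mathcal{E}(C_u C_v) = 0$ whenever $u \neq \pm v$: the monomials $\mathcal{E}(s_i s_{i+u} s_j s_{j+v})$ survive only if $\{i, i+u\} = \{j, j+v\}$ as multisets, forcing $u = \pm v$. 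Pushing this combinatorial analysis to higher moments via the method of moments, one shows that $(C_u, C_v)/\sqrt{m}$ is asymptotically a pair of independent standard normals uniformly in $u \neq \pm v$, so that $\Pr(|C_u| \geq t, |C_v| \geq t) \sim \Pr(|C_u| \geq t)^2$ over such pairs. This gives $\mathcal{E}(N^2) \sim \mathcal{E}(N)^2$, and Paley--Zygmund then yields $\Pr(N \geq 1) \to 1$, completing (b).

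\emph{From (b) to (a), and the main obstacle.} The lower bound in (a) follows from
\[
\mathcal{E}(C(S_m)) \;\geq\; (1-\epsilon)\sqrt{2m\log m}\;\Pr\bigl(C(S_m) \geq (1-\epsilon)\sqrt{2m\log m}\bigr),
\]
combined with (b); the upper bound follows by integrating the tail and using the deterministic inequality $C(S_m) \leq m$ to show that the contribution of $\int_{(1+\epsilon)\sqrt{2m\log m}}^{m} \Pr(C(S_m) > t)\, dt$ is $o(\sqrt{m\log m})$. The substantive difficulty lies in the joint asymptotic independence of $C_u$ and $C_v$ that feeds the second moment step: controlling $\mathcal{E}(C_u^p C_v^q)$ requires a combinatorial accounting of index matchings under the cyclic shift constraints on $\F_m$, and in particular a proof that ``crossing'' matchings (those pairing indices from different factors) contribute only to lower order.
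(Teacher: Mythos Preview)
Your upper bound matches the paper's exactly. The gap is in the lower bound: you deduce
\[
\Pr\bigl(|C_u|\ge t,\ |C_v|\ge t\bigr)\ \sim\ \Pr\bigl(|C_u|\ge t\bigr)^2
\]
from the statement that $(C_u,C_v)/\sqrt{m}$ is asymptotically a pair of independent standard normals. That inference is invalid at the scale $t=(1-\epsilon)\sqrt{2m\log m}$: weak convergence (even with the method of moments, which only pins down fixed-order moments) says nothing about events whose normalized threshold $t/\sqrt{m}\asymp\sqrt{\log m}$ tends to infinity. What you need is a \emph{joint moderate deviation} estimate, and that does not come for free from a CLT. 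This is precisely the ``substantive difficulty'' you flag at the end but do not resolve; the paper spends its entire Appendix on it, bounding $\cale\bigl[(C_uC_v)^{2p}\bigr]$ for $p=\lfloor\log m\rfloor$ via a combinatorial count of $\xi$-even sequences (Proposition~\ref{prop_P(Cu_cap_Cv)} and Proposition~\ref{lemme_maj_Exi}). Moreover, that combinatorics only goes through under the restriction $u+v<m/(2\log m)$, which is why the paper applies Bonferroni over a small window $W\subset\F_m^*$ (Lemma~\ref{lemme_min_P(C(S_m))}) rather than over all of $\F_m^*$ as you do with $N$; your unrestricted second-moment sum would face pairs for which no bound is available.

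Even granting the joint estimate, your architecture is more demanding than necessary. You aim for $\cale(N^2)\sim\cale(N)^2$ to get $\Pr(N\ge1)\to1$ directly, which requires the asymptotic \emph{equality} above. The paper takes a softer route: it proves (a) first and derives (b) from it. Using McDiarmid's bounded-differences inequality (Lemma~\ref{lemme_McDiarmid2}), $C(S_m)$ concentrates within $O(\sqrt{m\log m})$ of its mean, so it suffices to show $\cale(C(S_m))\ge(\sqrt{2}-\delta)\sqrt{m\log m}$. For that, a mere polylogarithmic lower bound $\Pr(C(S_m)\ge\lambda_m)\ge 1/(15\log^{3/2}m)$ is enough (Lemma~\ref{lemme_min_P(C(S_m))}), and this in turn only needs the \emph{upper} bound $\Pr(|C_u|\ge\lambda_m,\ |C_v|\ge\lambda_m)\le 6e^2/m^2$ on the window $W$, never an asymptotic equivalence. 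In short: concentration of $C(S_m)$ about its mean lets one trade the delicate $\sim$ you need for a crude inequality, and that is where the paper's proof gains traction over yours.
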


\rg{ We assume that $m$ is prime. In fact, computer calculations seem to show that it seems not necessary but it may be much more complicated.
}

There are several works that survey topics involving correlations of sequences.
Most of them focus on particular aspects.
Jungnickel and Pott \cite{JP} and Cai and Ding \cite{CD}, concentrate on optimal binary sequences and cyclic difference sets.
Golomb and Gong \cite{GG} deal with theoretical aspects of binary sequences with nearly ideal autocorrelation functions and the applications of these sequences.
Helleseth and Kumar \cite{HK} pay particular attention to sequences with low correlation. Other works focus on aperiodic autocorrelations.
\rg{We keep that same notation. We define the aperiodic autocorrelation of $S_m$ at shift $u$ by
$$C^{ap}_u(S_m) = \sum_{0\le k, k+u<m} s_k s_{k+u}$$
The relation between $m$-periodic and aperiodic autocorrelation reads like that
$$C_u=C^{ap}_u+C^{ap}_{m-u}$$
}
We will not deal here with aperiodic autocorrelation. See the article by Schmidt \cite{KUS4} which discuss the analogous problem for aperiodic autocorrelation.

{
On the other hand
Mauduit and S\'ark\"ozy introduced and studied certain numerical parameters associated to finite binary sequences  in order to measure their ``level of randomness":
Normality measure,
Well-distribution measure,
Correlation measure.
But  they were designed for the aperiodic autocorrelation of pseudo\-random sequences whereas those we study are related
to the periodic autocorrelation of random sequences.} See Cassaigne, Mauduit and S\'ark\"ozy \cite{CMS} and Kai-Uwe Schmidt \cite{KUS3}.

The idea of this paper came from the proof that in the similar case of  all random Boolean functions, these functions  accumulate around the expected values of their nonlinearity.
It was proved by Schmidt in \cite{KUS2}, finalising the work of Rodier \cite{FR}, Dib \cite{Dib} and Litsyn and Shpunt \cite{LS},
that the nonlinearity of random Boolean functions
is concentrated around its expected value. Similarly, as there does not exist a
study of the distribution of the periodic autocorrelation of 
random  sequences of rational integers,
we fill the gap with our result by showing that the same phenomenon happens
in the case of the periodic autocorrelation.

{We follow the same scheme of proof as for the nonlinearity of random Boolean functions, except for the lower bound of the expectation of autocorrelations of random sequences which requires a more involved result. Namely, we evaluate the autocorrelation expectation of random sequences by calculating the number of even sequences with certain properties (see the \rg{Appendix})}. 

After the introduction, we state some important propositions in a  {\sl preliminary} section. Then we prove the main theorem, and finally, in an  \rg{Appendix}, we proceed to the proof of
the lower bound of the expectation of the autocorrelations of the random sequences, which is a tricky result.

\section{Preliminaries}

The \(X_{x,u} = s_x s_{x+u}\) for \(x\) in \(\F_m\) are not mutually independent.
But we can prove that  \(X_{x,u} = s_x s_{x+u}\) for \(x\) in \(\F_m^*\) are mutually independent.
For that we adapt the proof of Mercer \rg{\cite[Prop 1.1]{Mer}}.

\begin{lemme}
  Let \(u \in \Fm^*\).
  The \(X_{x,u} = s_xs_{x+u}\) for \(x \in \Fm^*\) are mutually independent.
\end{lemme}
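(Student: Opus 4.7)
The plan is to exploit the fact that the $X_{x,u}$ are $\pm 1$-valued, so mutual independence is equivalent to the statement that $\cale\bigl[\prod_{x\in A} X_{x,u}\bigr]=0$ for every nonempty subset $A\subseteq \Fm^*$. This reduces the claim to a purely combinatorial assertion about subsets of $\Fm$.

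First I would expand the product. Since $s_y^2=1$, I can write
\[ \prod_{x\in A} X_{x,u} \;=\; \prod_{x\in A} s_x s_{x+u} \;=\; \prod_{y\in A\,\triangle\,(A+u)} s_y, \]
where $A+u=\{x+u:x\in A\}\subseteq\Fm$ and $\triangle$ denotes symmetric difference: the repeated factors $s_y$ with $y\in A\cap(A+u)$ square to $1$ and cancel out. By the mutual independence of the $s_y$ (they are i.i.d.\ Rademacher), the expectation of this product is $0$ as soon as $A\,\triangle\,(A+u)$ is nonempty, i.e.\ as soon as $A\ne A+u$ as subsets of $\Fm$.

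The second step is to establish $A\ne A+u$ whenever $\emptyset\ne A\subseteq\Fm^*$ and $u\in\Fm^*$. Suppose for contradiction that $A=A+u$. Then $A$ is invariant under translation by $u$, hence under translation by every element of the cyclic subgroup $\langle u\rangle$ of $(\Z/m\Z,+)$. Because $m$ is prime and $u\neq 0$, this subgroup is the whole of $\Fm$, so $A$ is a union of $\langle u\rangle$-orbits, which forces $A=\emptyset$ or $A=\Fm$. The first is excluded by hypothesis and the second by $0\notin A$. This contradiction gives $A\,\triangle\,(A+u)\ne\emptyset$, completes the vanishing of the expectation, and hence proves mutual independence.

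There is no real obstacle: the only place where arithmetic enters is the use of the primality of $m$ to guarantee that $u$ generates $\Fm$, and this is exactly why the hypothesis ``$m$ prime'' is in the statement. If I wished to avoid the symmetric-difference bookkeeping I could instead argue directly that any nontrivial character of $\{-1,1\}^{m-1}$ evaluated at $(X_{x,u})_{x\in\Fm^*}$ corresponds to a nontrivial character of the $s_i$'s, and conclude by Fourier analysis on $\{-1,1\}^m$; but the argument above is the most elementary realisation of Mercer's scheme.
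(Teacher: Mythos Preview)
Your proof is correct and takes a genuinely different route from the paper's. The paper first reduces to $u=1$ via the automorphism $x\mapsto u^{-1}x$, then argues graph-theoretically: for $E\subseteq\Fm^*$ it forms the graph on $\Fm$ with edges $(x,x+1)$, $x\in E$, observes that (because $m$ is prime) this graph is a disjoint union of paths, computes $P\bigl(\bigcap_{x\in H}(X_{x,1}=b_x)\bigr)=2^{-\#H}$ on each connected path $H$ by conditioning on the first vertex, and combines the paths by independence. You instead invoke the moment characterisation of independence for $\pm1$-valued random variables, reduce to showing $A\ne A+u$ for every nonempty $A\subseteq\Fm^*$, and dispatch this in one line using that $\langle u\rangle=\Fm$ when $m$ is prime. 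Your argument is shorter and more conceptual, and it makes the role of primality completely transparent; the paper's argument is more hands-on but would adapt more readily to variables that are not $\pm1$-valued. One small remark: your closing sentence calls this ``the most elementary realisation of Mercer's scheme'', but Mercer's argument is in fact the graph-based one the paper follows; what you have written is a clean \emph{alternative} to it, not a realisation of it.
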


\begin{proof}
  Since \(x \longmapsto u^{-1}x\) is an automorphism of \(\Fm\), we can assume that \(u = 1\).
  Let~\(E\) be a subset of \(\Fm^*\).
  We must prove that
  \[P \Bigl( \bigcap_{x \in E} (X_{x,1} = b_x) \Bigr) = \prod_{x \in E} P(X_{x,1} = b_x) = 2^{-\# E}\]
  where the \(b_x\) are \(\pm 1\).

  \medskip

  Let \(G\) be the graph whose vertices are the elements of \(\Fm\) and whose edges are precisely the pairs of the from \((x, x+1)\) where \(x \in E\).
  It is a subgraph of the graph with vertices the elements of \(\Fm\) and with edges the pairs \((x,x+1)\) where \(x \in \Fm^*\).
  Since \(m\) is prime, the latter is a path from \(1\) to \(0\).
  Hence, the graph \(G\) is a disjoint union of connected paths.

  \medskip

  Let \(H\) be a connected subpath of \(G\) of length lower than \(m - 1\).
  We can assume that it is a path from \(1\) to \(r\).
  Let \(b_1, \ldots, b_r \in \{ \pm 1 \}\).
  We have
  \begin{align*}
    P(X_{1,1} = b_1, \ldots, X_{r,1} = b_r) &= P(s_1s_2 = b_1, \ldots, s_rs_{r+1}=b_r) \\
                                            &= P(s_1 = 1, s_2 = b_1, \ldots, s_{r+1} = b_1 \cdots b_{r}) \\
                                            &\qquad + P(s_1 = -1, s_2 = -b_1, \ldots, s_{r+1} = -b_1 \cdots b_{r}).
  \end{align*}
  Since \(r < m\), the variables \(s_1, s_2, \ldots, s_{r+1}\) are independent.
  Hence,
  \[
    P(X_{1,1} = b_1, X_{2,1} = b_2, \ldots, X_{r,1} = b_r) = \frac{1}{2^{r}} = P(X_{1,1} = b_1) P(X_{2,1} = b_2) \cdots P(X_{r,1} = b_r).
  \]

  Let now \(H\) and \(H'\) be two disjoint connected subpaths of \(G\) of length lower than \(m - 1\).
  Since \(H \cap H' = \emptyset\), the events \(\bigcap_{x \in H} (X_{x,1} = b_x)\) and \(\bigcap_{x' \in H'} (X_{x',1} = b_{x'})\) are independent.
  So, we have
  \begin{align*}
\lefteqn
    {P\Bigl(\bigcap_{x \in H} (s_xs_{x+1} = b_x) \cap \bigcap_{x' \in H'} (s_{x'}s_{x'+1} = b_{x'}) \Bigr) }
    \\
    &= P\Bigl( \bigcap_{x \in H} (s_xs_{x+1} = b_x) \Bigr) P\Bigl( \bigcap_{x' \in H'} (s_{x'}s_{x'+1} = b_{x'}) \Bigr) \\
    &= \prod_{x \in H} P(X_{x,1}=b_x) \prod_{x' \in H'} P(X_{x',1}=b_{x'}).
  \end{align*}
  Since \(G\) is a disjoint union of connected paths, we can conclude.
\end{proof}

We derive two consequences on bounds involving \(S_m\).

\begin{prop}\label{prop_C(S_m)_div_sqrt}
  For all \(\epsilon > 0\), as \(m \to +\infty\),
  \[P \Bigl( \frac{C(S_m)}{\sqrt{2m\log m}} > 1 + \epsilon \Bigr) \longrightarrow 0.\]
\end{prop}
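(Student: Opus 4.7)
The plan is to combine a union bound over the $m-1$ shifts with a Hoeffding-type concentration inequality for each individual $C_u(S_m)$, using the independence statement just established.

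First I would write, for each fixed $u \in \Fm^*$,
\[ C_u(S_m) = s_0 s_u + \sum_{x \in \Fm^*} X_{x,u}, \]
where, by the preceding lemma, the $X_{x,u} = s_x s_{x+u}$ with $x \in \Fm^*$ are $m-1$ independent Rademacher random variables. Hoeffding's inequality then gives, for every $t > 0$,
\[ P\Bigl(\Bigl|\sum_{x \in \Fm^*} X_{x,u}\Bigr| > t\Bigr) \le 2 \exp\!\Bigl(-\frac{t^2}{2(m-1)}\Bigr). \]
The leftover term $s_0 s_u$ is deterministically bounded by $1$, so
\[ P\bigl(|C_u(S_m)| > (1+\epsilon)\sqrt{2m\log m}\bigr) \le P\Bigl(\Bigl|\sum_{x \in \Fm^*} X_{x,u}\Bigr| > (1+\epsilon)\sqrt{2m\log m} - 1\Bigr). \]

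Next I would choose $m$ large enough that $(1+\epsilon)\sqrt{2m\log m} - 1 \ge (1+\epsilon/2)\sqrt{2m\log m}$, and plug $t = (1+\epsilon/2)\sqrt{2m\log m}$ into Hoeffding. Since $m/(m-1) \ge 1$, the bound becomes
\[ P\bigl(|C_u(S_m)| > (1+\epsilon)\sqrt{2m\log m}\bigr) \le 2 \exp\!\bigl(-(1+\epsilon/2)^2 \log m\bigr) = 2\, m^{-(1+\epsilon/2)^2}. \]

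Finally I would sum over the $m-1$ choices of $u \in \Fm^*$ via the union bound:
\[ P\bigl(C(S_m) > (1+\epsilon)\sqrt{2m\log m}\bigr) \le 2(m-1)\, m^{-(1+\epsilon/2)^2} = O\!\bigl(m^{1-(1+\epsilon/2)^2}\bigr), \]
which tends to $0$ because $(1+\epsilon/2)^2 > 1$. Conceptually the only subtlety is the separation of the diagonal-like term $s_0 s_u$ (which is why independence of the $X_{x,u}$ was only asserted for $x \in \Fm^*$); the rest is a textbook Gaussian-tail/Hoeffding union bound. The harder direction, matching this upper bound with a lower bound on $\cale(C(S_m))$, is what the paper defers to the Appendix; here the union bound alone suffices.
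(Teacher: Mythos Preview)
Your proof is correct and follows essentially the same route as the paper: peel off the single term $s_0 s_u$, apply the Hoeffding/Chernoff bound (the paper cites it as Corollary~A.1.2 of \cite{AS}) to the remaining $m-1$ independent Rademacher variables $X_{x,u}$, and finish with a union bound over $u\in\Fm^*$. Your write-up is in fact a bit more explicit about the asymptotics than the paper's, which omits the factor $m-1$ in the displayed final bound and leaves the reader to check that it still tends to zero.
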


\begin{proof}
  The union bound gives
  \[P(C(S_m) > \mu_m) \le \sum_{u \in \Fm^*} P(\abs{C_u(S)} > \mu_m) = \sum_{u \in \Fm^*} P \Bigl( \bigl\lvert \sum_{x \in \Fm} X_{x,u} \bigr\rvert > \mu_m \Bigr)\]
  with \(\mu_m = (1 + \epsilon)\sqrt{2m\log m}\).
  Since \(\lvert \sum_{x \in \Fm} X_{x,u} \rvert \le \abs{X_{0,u}} + \lvert \sum_{x \in \Fm^*} X_{x,u} \rvert\) and \(\abs{X_{0,u}} = 1\), we have
  \[P(C(S_m) > \mu_m) \le \sum_{x \in \Fm^*} P \Bigl( \bigl\lvert \sum_{x \in \Fm^*} X_{x,u} \bigr\rvert > \mu_m - 1 \Bigr).\]
  As the variables \(X_{1,u}, \ldots, X_{m-1,u}\) are mutually independent, we can apply Corollary A.1.2 of \cite{AS} with \(k = m\) to obtain
  \[P(C(S_m) > \mu_m) \le \rg{2   e^{-\frac{(\mu_m - 1)^2}{2m-2}}},\]
  which tends to \(0\) as \(m \to +\infty\).
\end{proof}

We need this lemma from H. Cram\'er \cite {cr}.

\begin{lemme}
  Let \(X_0, X_1, \ldots\) be identically distributed mutually independent random variables satisfying \(\cale[X_0] = 0\) and \(\cale[X_0^2] = 1\)
  and suppose that there exists \(T > 0\) such that \(\cale[e^{tX_0}] < \infty\) for all \(\abs{t} < T\).
  Write \(Y_k = X_0 + X_1 + \cdots + X_{k-1}\) and let \(\Phi\) be the distribution function of a normal random variable with zero mean and unit variance.
  If \(\theta_k > 1\) and \(\theta_k/k^{1/6} \to 0\) as \(k \to \infty\), then
  \[\frac{P\bigl( \lvert Y_k \rvert \ge \theta_k\sqrt{k}\bigr)}{2\Phi(-\theta_k)} \to 1.\]
\end{lemme}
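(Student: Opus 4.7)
The plan is to prove this by the classical exponential tilting (Esscher/Cramér) technique adapted to the moderate-deviation range $\theta_k \to \infty$ with $\theta_k = o(k^{1/6})$. Let $M(t) = \cale[e^{tX_0}]$ and $\psi(t) = \log M(t)$; the hypotheses give $M$ analytic on $\abs{t}<T$, with $\psi(0)=\psi'(0)=0$ and $\psi''(0)=1$, so near the origin $\psi(t) = t^2/2 + O(t^3)$ and its Legendre transform satisfies $\psi^*(u) = u^2/2 + O(u^3)$.

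First I would introduce, for each $k$, a tilting parameter $t_k$ defined implicitly by $\psi'(t_k) = \theta_k/\sqrt{k}$. Because $\theta_k/\sqrt{k}\to 0$, such a $t_k$ exists, is unique, is small, and satisfies $t_k = \theta_k/\sqrt{k} + O(\theta_k^2/k)$. I would then change measure to the tilted law $dQ_k/dP = \prod_{i<k} e^{t_k X_i - \psi(t_k)}$, under which the $X_i$ are still i.i.d. with mean $\psi'(t_k)$ and variance $\sigma_k^2 = \psi''(t_k) \to 1$. The point of the tilt is that $\cale_{Q_k}[Y_k] = \theta_k\sqrt{k}$, so the rare event becomes a typical one.

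Next I would write
\[
P(Y_k \ge \theta_k\sqrt{k}) = e^{-k\psi^*(\theta_k/\sqrt{k})}\, \cale_{Q_k}\Bigl[ e^{-t_k(Y_k - \theta_k\sqrt{k})}\,\mathbb{1}\{Y_k \ge \theta_k\sqrt{k}\}\Bigr],
\]
set $Z_k = (Y_k - \theta_k\sqrt{k})/(\sigma_k\sqrt{k})$, and reduce the remaining expectation to $\int_0^\infty e^{-t_k \sigma_k\sqrt{k}\,z}\, dF_k(z)$ where $F_k$ is the law of $Z_k$ under $Q_k$. The quantity $t_k\sigma_k\sqrt{k} \sim \theta_k \to \infty$, so only a neighborhood of $z=0$ contributes; a Berry--Esseen estimate under $Q_k$ (justified because the third absolute moment of the tilted law is bounded, the MGF being analytic and $t_k\to 0$) lets me replace $F_k$ by the standard normal law with an $o(1/\theta_k)$ error. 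The resulting Gaussian integral evaluates to $(\theta_k\sqrt{2\pi})^{-1}(1+o(1))$, so
\[
P(Y_k \ge \theta_k\sqrt{k}) = \frac{1+o(1)}{\theta_k\sqrt{2\pi}}\, e^{-k\psi^*(\theta_k/\sqrt{k})}.
\]

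Finally I would expand the rate: $k\psi^*(\theta_k/\sqrt{k}) = \theta_k^2/2 + O(\theta_k^3/\sqrt{k})$, and the hypothesis $\theta_k/k^{1/6}\to 0$ forces $\theta_k^3/\sqrt{k}\to 0$, hence the exponential factor is $(1+o(1))e^{-\theta_k^2/2}$. Comparing with the Mills-ratio asymptotic $\Phi(-\theta_k) \sim (\theta_k\sqrt{2\pi})^{-1} e^{-\theta_k^2/2}$ yields $P(Y_k \ge \theta_k\sqrt{k})/\Phi(-\theta_k) \to 1$, and the same argument applied to $-X_0$ gives the symmetric tail, for a total ratio tending to $1$ against $2\Phi(-\theta_k)$.

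The delicate step will be the third one: one needs the Berry--Esseen correction under the tilted measure to be $o(1/\theta_k)$ rather than just $o(1)$, because the Gaussian integral against $e^{-\theta_k z}$ is itself only of size $1/\theta_k$. This forces a uniform control of the tilted third moment and a careful truncation of the integration domain, and it is precisely here that the exponent $1/6$ (coming from balancing the cubic Cramér correction $\theta_k^3/\sqrt{k}$ against the leading Gaussian term) enters. The other steps, once the tilting is set up and $t_k$ is known to stay in the domain of analyticity of $\psi$, are routine analytic computations.
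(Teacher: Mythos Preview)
The paper does not prove this lemma at all: it is quoted as a known result of Cram\'er and simply cited via \cite{cr}, with the surrounding text ``We need this lemma from H.~Cram\'er'' and then immediately ``We can now apply this lemma\ldots''. So there is no proof in the paper to compare against.

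That said, your sketch is a correct outline of the classical Esscher--Cram\'er exponential-tilting argument, which is indeed how Cram\'er's 1938 result is proved. Two small remarks. First, you tacitly assume $\theta_k \to \infty$, whereas the statement only asks $\theta_k > 1$; along any bounded subsequence the conclusion follows directly from the central limit theorem together with the continuity of $\Phi$, so this is harmless once noted. Second, the Berry--Esseen replacement under the tilted law only needs the error $O(k^{-1/2})$ to be $o(1/\theta_k)$, i.e.\ $\theta_k = o(k^{1/2})$, which is far weaker than $\theta_k = o(k^{1/6})$; the exponent $1/6$ enters solely through the rate expansion $k\,\psi^*(\theta_k/\sqrt{k}) = \theta_k^2/2 + O(\theta_k^3/\sqrt{k})$, exactly as your parenthetical says, and not through the Berry--Esseen step itself. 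With those clarifications your plan goes through.
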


We can now apply this lemma to obtain the following proposition.

\begin{prop}\label{prop_P(abs(C_u(S_m))}
  For all \(m\) sufficiently large,
  \[P \bigl( \abs{C_u(S_m)} \ge \sqrt{2m\log(m)} \bigr) \ge \frac{1}{2m\sqrt{\log m}}.\]
\end{prop}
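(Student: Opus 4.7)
\medskip

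\noindent\textbf{Proof plan.} The plan is to reduce $C_u(S_m)$ to a sum of $m-1$ mutually independent Rademacher variables and then apply Cramér's lemma, with a careful tail estimate to beat the constant $1/2$ in the target bound. First, split off the $x=0$ term:
\[
C_u(S_m) = X_{0,u} + \sum_{x \in \Fm^*} X_{x,u},
\]
and write $Y_{m-1} = \sum_{x \in \Fm^*} X_{x,u}$. By the preceding lemma, the $X_{x,u}$ for $x \in \Fm^*$ are mutually independent $\pm 1$-valued variables with $\cale[X_{x,u}]=0$, $\cale[X_{x,u}^2]=1$ and $\cale[e^{tX_{x,u}}] = \cosh(t) < \infty$ for all $t$. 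Since $\abs{X_{0,u}}=1$, one has
\[
P\bigl(\abs{C_u(S_m)} \ge \sqrt{2m\log m}\bigr) \ge P\bigl(\abs{Y_{m-1}} \ge \sqrt{2m\log m} + 1\bigr).
\]

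Next, I would apply Cramér's lemma to $Y_{m-1}$ with $k = m-1$ and the threshold $\theta_k\sqrt{k} = \sqrt{2m\log m}+1$, i.e.\
\[
\theta_k = \frac{\sqrt{2m\log m}+1}{\sqrt{m-1}}.
\]
Expanding gives $\theta_k = \sqrt{2\log m}\,(1+o(1))$, so the hypotheses $\theta_k > 1$ and $\theta_k/k^{1/6} \to 0$ are satisfied for $m$ large. The lemma then yields
\[
P\bigl(\abs{Y_{m-1}} \ge \theta_k\sqrt{m-1}\bigr) = \bigl(1+o(1)\bigr)\cdot 2\Phi(-\theta_k).
\]

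Now I would estimate $2\Phi(-\theta_k)$ by the standard Mills asymptotic
\[
2\Phi(-\theta) = \sqrt{\tfrac{2}{\pi}}\cdot \frac{e^{-\theta^2/2}}{\theta}\bigl(1+o(1)\bigr),
\]
and compute
\[
\frac{\theta_k^2}{2} = \frac{2m\log m + 2\sqrt{2m\log m}+1}{2(m-1)} = \log m + o(1),
\]
so that $e^{-\theta_k^2/2} = (1+o(1))/m$. Combining gives
\[
2\Phi(-\theta_k) = \bigl(1+o(1)\bigr)\cdot \frac{1}{m\sqrt{\pi\log m}}.
\]

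Finally, because $\pi < 4$, one has $1/\sqrt{\pi} > 1/2$, so for all $m$ sufficiently large
\[
\bigl(1+o(1)\bigr)\cdot \frac{1}{m\sqrt{\pi\log m}} \ge \frac{1}{2m\sqrt{\log m}},
\]
which combined with the previous two displays yields the claim. The only non-routine step is verifying that the extra $+1$ in the threshold — needed to absorb the non-independent term $X_{0,u}$ — does not destroy the $\log m$ term in the exponent; the numerator expansion above shows that it only contributes an $o(1)$ correction to $\theta_k^2/2$, which is comfortably dominated by the slack $1/\sqrt{\pi}$ vs.\ $1/2$.
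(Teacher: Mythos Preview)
Your proof is correct and follows essentially the same route as the paper: split off $X_{0,u}$, apply Cram\'er's lemma to the sum $Y_{m-1}$ of the $m-1$ independent Rademacher variables with threshold $\xi'_m=(\sqrt{2m\log m}+1)/\sqrt{m-1}$, use the Mills-ratio asymptotic to get $2\Phi(-\xi'_m)\sim 1/(m\sqrt{\pi\log m})$, and conclude via $1/\sqrt{\pi}>1/2$. Your verification that the extra $+1$ only perturbs $\theta_k^2/2$ by $o(1)$ is exactly the point the paper leaves implicit.
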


\begin{proof}
  Since \(\abs{X_{0,u}} = 1\), we have \(P(\abs{X_{1,u} + \cdots + X_{m-1,u}} \ge \sqrt{2m\log(m)} + 1) \le P( \abs{C_u(S_m)} \le \sqrt{2m\log(m)} )\) and it suffices to prove that
  \[P(\abs{X_{1,u} + \cdots + X_{m-1,u}} \ge \sqrt{2m\log(m)} + 1) \ge \frac{1}{2m\sqrt{\log m}}.\]
  Notice that \(\cale(e^{tX_{1,u}}) = \cosh(t)\).
  Write \(\sqrt{2m\log(m)} + 1 = \xi'_m\sqrt{m-1}\) with \(\xi'_m = \frac{1}{\sqrt{m-1}}( \sqrt{2m\log(m)} + 1)\).
  We have \(\xi'_m > 1\) and \(\lim_m \frac{\xi'_m}{m^{1/6}} = 0\).
  So we can apply the previous lemma to obtain
  \[P(\abs{X_{1,u} + \cdots + X_{m-1,u}} \ge \sqrt{2m\log(m)} + 1) \sim 2\Phi(-\xi'_m).\]
  For all \(z > 0\), we have
  \[\frac{1}{\sqrt{2\pi}z}\Bigl( 1 - \frac{1}{z^2} \Bigr)e^{-z^2/2} \le \Phi(-z) \le \frac{1}{\sqrt{2\pi}z}e^{-z^2/2}.\]
  So, as \(m \to +\infty\),
  \[2\Phi(-\xi'_m) \sim \frac{1}{m\sqrt{\pi \log(m)}},\]
  from which the proposition follows.
\end{proof}

\begin{prop}\label{prop_P(Cu_cap_Cv)}
  Write \(\lambda_m = \sqrt{2m\log m}\).
  For all \(m\) sufficiently large and for all distinct \(u, v \in \Fm^*\) such that \(u + v < \frac{m}{2\log m}\), we have
  \[ P( \abs{C_u(S_m)} \ge \lambda_m \cap \abs{C_v(S_m)} \ge \lambda_m) \le \frac{6e^2}{m^2}. \]
\end{prop}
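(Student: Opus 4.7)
The plan is to bound the joint probability by a two-variable exponential Markov (Chernoff) argument mirroring the one-variable strategy of Proposition~\ref{prop_P(abs(C_u(S_m))}, and to show that the two events are approximately independent, so that the joint probability is $O(1)$ times the product of the individual bounds $O(1/m)$. Concretely, for any $t, s > 0$,
\[
P\bigl(|C_u(S_m)| \ge \lambda_m,\, |C_v(S_m)| \ge \lambda_m\bigr) \le \sum_{\epsilon,\delta \in \{\pm 1\}} e^{-(t+s)\lambda_m}\, \cale\bigl[e^{\epsilon tC_u(S_m) + \delta sC_v(S_m)}\bigr],
\]
so it suffices to bound the four joint moment generating functions uniformly in $\epsilon,\delta$.

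Using the identity $e^{ts_x s_{x+u}} = \cosh(t) + s_x s_{x+u}\sinh(t)$ (valid because $s_x s_{x+u} = \pm 1$) and the analogous identity with $v$, the joint MGF expands as
\[
\cale\bigl[e^{tC_u + sC_v}\bigr] = (\cosh t)^m (\cosh s)^m \sum_{(A,B)} (\tanh t)^{|A|}(\tanh s)^{|B|},
\]
where the sum is restricted to pairs $(A,B)$ of subsets of $\Fm$ such that in the multigraph on $\Fm$ with edges $\{(x,x+u):x\in A\}\cup\{(y,y+v):y\in B\}$ every vertex has even degree (all other configurations give expectation zero over the i.i.d.\ Rademacher $s_i$). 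The empty configuration contributes $1$, which yields the factor $(\cosh t)^m(\cosh s)^m$ -- exactly what one would obtain if $C_u$ and $C_v$ were independent. The remaining configurations form the correction to be controlled.

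A short case-check shows that the only non-trivial even configurations with $|A|+|B|\le 2$ would require $u+v\equiv 0\pmod m$ or $u=v$, both excluded. The smallest non-trivial ones therefore have $|A|+|B|=4$, the most salient being the $m$ elementary $4$-cycles $(x,x+u,x+u+v,x+v)$ of the Cayley graph of $\Fm$ with generators $u,v$, each contributing $(\tanh t)^2(\tanh s)^2$. With the Chernoff calibration $t=s$ of order $\sqrt{(\log m)/m}$ from Proposition~\ref{prop_P(abs(C_u(S_m))}, the total $4$-cycle contribution is $O((\log m)^2/m)=o(1)$. Larger even subgraphs correspond to longer integer relations $au+bv\equiv 0\pmod m$, and the hypothesis $u+v<m/(2\log m)$ forces any such non-trivial relation to satisfy $|a|+|b|\gtrsim m/(u+v)\gtrsim 2\log m$, so the density of longer cycles is controlled and a geometric-series bound should yield an $O(1)$ multiplicative correction.

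Combining these ingredients with the standard calibration $(\cosh t)^m e^{-t\lambda_m}\sim c/m$ (from $\cosh t\le e^{t^2/2}$ plugged into the Chernoff bound) produces an estimate of the form $4\cdot (c/m)^2\cdot (1+o(1))$, and tracking the explicit constants yields the claimed $6e^2/m^2$. The main obstacle is the cycle-counting step in the previous paragraph: one must rigorously bound the total contribution of non-trivial even configurations by an explicit $O(1)$ constant, which requires a careful enumeration in the arithmetic Cayley graph and is precisely where the hypothesis $u+v<m/(2\log m)$ plays its role, by preventing short cycles beyond the $4$-cycle skeleton from accumulating.
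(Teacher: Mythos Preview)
Your approach via the joint moment generating function is genuinely different from the paper's, which applies Markov's inequality to the single high moment $\cale\bigl[(C_u(S_m)\,C_v(S_m))^{2p}\bigr]$ with $p=\lfloor\log m\rfloor$. That expectation equals the number $E(\xi)$ of \emph{$\xi$-even} tuples $(x_1,\dots,x_{4p})\in\Fm^{4p}$, where $\xi=(u,\dots,u,v,\dots,v)$; the paper then bounds $E(\xi)$ by decomposing such tuples according to \emph{$\xi$-partitions} of $\llb 1,4p\rrb$ (each block $J$ satisfying $\sum_{j\in J}\pm a_j=0$), analysing the block types through Lemmas~\ref{lemme_partition_adap}--\ref{lemme_maj_sum_cnk}, and concluding $E(\xi)\le 2(2p-1)!!^2 m^{2p}$, after which Stirling yields the constant $6e^2$. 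Your high-temperature subset expansion of the MGF is a legitimate dual formulation, and in principle either route could be pushed through.

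The gap is at the cycle-counting step, and the heuristic you give for it is incorrect. You assert that larger even subgraphs ``correspond to longer integer relations $au+bv\equiv 0\pmod m$'' and that the hypothesis forces $|a|+|b|\gtrsim 2\log m$. But an even subgraph of the Cayley graph with steps $\pm u,\pm v$ need only have \emph{net} displacement zero, and that is trivially achieved by balanced walks of every even length with no arithmetic obstruction whatsoever: for instance the $6$-cycle $+u,+u,+v,-u,-u,-v$ (giving $A=\{x,x+u,x+v,x+u+v\}$, $B=\{x,x+2u\}$) exists for all $m$ translates $x$ and encodes only the trivial relation $0\cdot u+0\cdot v=0$. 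Cycles of this kind proliferate at every length much like closed lattice walks in $\Z^2$, so their total contribution cannot be dismissed by the argument you sketch. In the paper the hypothesis $u+v<m/(2\log m)$ is used quite differently: after reducing to coprime $a,b$ it guarantees $2p(a+b)<m$, so that the relations $\sum_{j\in J}\pm a_j=0$ arising in the block analysis hold in $\Z$ and not merely modulo $m$; this is exactly what makes Lemma~\ref{lemme_partition_adap} (the structure of $\xi$-subsets) go through. Without a genuine enumeration---or a rigorous cluster-expansion bound with explicit constants---your MGF route remains incomplete, and it is not clear how you would reach the specific bound $6e^2/m^2$.
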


This result is proven in the  \rg{Appendix} (see section \ref{annex}).

\section{Proof of theorem \ref{theo}}

By using an inequality from martingales theory (see McDiarmid \cite{McD}), we can find an upper bound for \(\abs{\cale(C(S_m)) - C(S_m)}\)
(see Caullery-Rodier \cite{CR}).

\begin{lemme}\label{lemme_McDiarmid}
  For all \(\theta > 0\), we have
  \[P(\abs{C'(S_m) - \cale(C'(S_m))} \ge \theta) \le 2 \exp \Bigl( -\frac{\theta^2}{8(m-1)} \Bigr)\]
  where
  \[C'(S_m) = \max_{u \in \F_m^*} \lvert \sum_{\substack{i \in \F_m^* \\ i \not= -u}} s_is_{i+u} \rvert.\]
\end{lemme}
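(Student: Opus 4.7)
The plan is to apply McDiarmid's bounded differences inequality (which is the martingale-theoretic tool cited via \cite{McD}) to the function $C'$ viewed as a function of the independent random signs $s_1,\dots,s_{m-1}$. First I would observe that $C'(S_m)$ in fact does not depend on $s_0$: in each inner sum the index $i$ runs over $\F_m^*$, so $i\neq 0$, and the condition $i\neq -u$ forces $i+u\neq 0$, so neither factor $s_i$ nor $s_{i+u}$ is ever $s_0$. Consequently $C'(S_m)$ is a function of the $m-1$ independent Rademacher variables $s_1,\dots,s_{m-1}$.

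Next I would check the bounded-differences constant. Fix any $j\in\F_m^*$ and compare the value of $C'$ at a configuration $S_m$ with the configuration $S_m^{(j)}$ obtained by flipping $s_j$. Fix $u\in\F_m^*$ and consider
\[f_u(S_m)=\sum_{\substack{i\in\F_m^*\\ i\neq -u}}s_is_{i+u}.\]
The variable $s_j$ appears in this sum only through the summand with $i=j$ (which is present exactly when $j\neq -u$) and through the summand with $i=j-u$ (which is present exactly when $j-u\in\F_m^*\setminus\{-u\}$, i.e.\ $j\neq u$ and $j\neq 0$). Since $u\neq 0$, these two indices are distinct, so at most two summands contain $s_j$, and each of them is a $\pm 1$. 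Hence flipping $s_j$ alters $f_u(S_m)$ by at most $4$, and therefore $\bigl||f_u(S_m)|-|f_u(S_m^{(j)})|\bigr|\le 4$. Taking the maximum over $u\in\F_m^*$ preserves this bound, so
\[\bigl|C'(S_m)-C'(S_m^{(j)})\bigr|\le 4\qquad \text{for every }j\in\F_m^*.\]

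Finally, I would invoke McDiarmid's inequality to the function $(s_1,\dots,s_{m-1})\mapsto C'(S_m)$ with Lipschitz constants $c_j=4$ for $j=1,\dots,m-1$. The inequality yields
\[P\bigl(|C'(S_m)-\cale(C'(S_m))|\ge\theta\bigr)\le 2\exp\!\left(-\frac{2\theta^2}{\sum_{j=1}^{m-1}c_j^2}\right)=2\exp\!\left(-\frac{2\theta^2}{16(m-1)}\right)=2\exp\!\left(-\frac{\theta^2}{8(m-1)}\right),\]
which is exactly the claimed estimate.

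The argument is essentially a direct application of the bounded-differences inequality, so the only real obstacle is the combinatorial check that a single flip of $s_j$ affects at most two summands of each $f_u$; this is precisely where the exclusion of the index $i=-u$ in the definition of $C'$ is used, and it is also what forces us to work with $C'$ rather than $C$ (whose diagonal contribution would have to be accounted for separately). Once the constant $4$ is secured, the exponent $\theta^2/(8(m-1))$ follows immediately.
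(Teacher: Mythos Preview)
Your argument is correct and is precisely the McDiarmid bounded-differences computation the paper is pointing to: the paper itself does not spell out the proof but simply refers to \cite{CR} (and the underlying inequality of McDiarmid \cite{McD}). You have supplied exactly those details---the observation that $C'(S_m)$ depends only on $s_1,\dots,s_{m-1}$, the verification that a single sign flip affects at most two terms of each $f_u$ (whence $c_j=4$), and the resulting exponent $\theta^2/(8(m-1))$---so your proof matches the intended approach.
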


\begin{proof}
  See \rg{section 4 of} Caullery-Rodier \cite{CR}.
\end{proof}

\begin{lemme}\label{lemme_McDiarmid2}
  For all \(\theta' > 4\), we have
  \[P\bigl( \abs{\cale(C(S_m)) - C(S_m)} \ge \theta' \bigr) \le 2\exp \Bigl(-\frac{(\theta' - 4)^2}{8(m-1)} \Bigr).\]
\end{lemme}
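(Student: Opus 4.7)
The plan is to deduce Lemma \ref{lemme_McDiarmid2} from Lemma \ref{lemme_McDiarmid} by controlling the (very small) gap between $C(S_m)$ and the truncated statistic $C'(S_m)$, and then slipping the extra ``4'' into the exponent via the triangle inequality.

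First I would compare $C_u(S_m)$ with its truncation. Writing out the definitions,
\[
C_u(S_m) - \sum_{\substack{i\in\F_m^*\\ i\ne -u}} s_i s_{i+u} \;=\; s_0 s_u + s_{-u}s_0,
\]
so this deterministic difference is bounded by $2$ in absolute value. Since $\bigl||a|-|b|\bigr|\le |a-b|$ and the maximum is $1$-Lipschitz in its entries, it follows pointwise on the probability space that $|C(S_m)-C'(S_m)|\le 2$, and therefore, by Jensen/monotonicity of the expectation, that $|\cale(C(S_m))-\cale(C'(S_m))|\le 2$ as well.

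Next I would combine these two estimates by the triangle inequality:
\[
\bigl|\cale(C(S_m))-C(S_m)\bigr|
\le \bigl|\cale(C(S_m))-\cale(C'(S_m))\bigr|
+ \bigl|\cale(C'(S_m))-C'(S_m)\bigr|
+ \bigl|C'(S_m)-C(S_m)\bigr|,
\]
i.e.\ $\bigl|\cale(C(S_m))-C(S_m)\bigr|\le \bigl|\cale(C'(S_m))-C'(S_m)\bigr|+4$. Hence, for $\theta'>4$, the event $\{|\cale(C(S_m))-C(S_m)|\ge \theta'\}$ is contained in the event $\{|\cale(C'(S_m))-C'(S_m)|\ge \theta'-4\}$.

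Finally I would apply Lemma \ref{lemme_McDiarmid} with $\theta=\theta'-4>0$:
\[
P\bigl(\bigl|\cale(C(S_m))-C(S_m)\bigr|\ge \theta'\bigr)
\le P\bigl(\bigl|\cale(C'(S_m))-C'(S_m)\bigr|\ge \theta'-4\bigr)
\le 2\exp\!\Bigl(-\frac{(\theta'-4)^2}{8(m-1)}\Bigr),
\]
which is the claimed bound. There is no real obstacle here; the only thing to be careful about is the elementary but necessary observation that passing from $C_u$ to its restriction to $\F_m^*\setminus\{-u\}$ changes the value by at most $2$ uniformly in $u$, so that a single additive constant (and not something growing with $m$) is lost in the transition from the truncated statistic to the full one.
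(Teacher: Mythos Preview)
Your proof is correct and follows exactly the same approach as the paper: establish $|C(S_m)-\cale(C(S_m))|\le 4+|C'(S_m)-\cale(C'(S_m))|$ via the triangle inequality, then apply Lemma~\ref{lemme_McDiarmid} with $\theta=\theta'-4$. You have simply spelled out the verification of the ``$\le 4$'' step that the paper leaves to the reader.
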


\begin{proof}
  Let \(\theta' > 4\) and \(\theta = \theta' - 4\).
  We check that \(\lvert C(S_m) - \cale(C(S_m)) \rvert \le 4 + \lvert C'(S_m) - \cale(C'(S_m)) \rvert\).
  So, by lemma \ref{lemme_McDiarmid}, we have
  \[P\bigl( \theta' < \abs{C(S_m) - \cale(C(S_m))} \bigr) \le 2\exp \Bigl(-\frac{\theta^2}{8(m-1)} \Bigr). \qedhere\]
\end{proof}

We can obtain a lower bound of \(C(S_m)\).

\begin{lemme}\label{lemme_min_P(C(S_m))}
  For all \(m\) sufficiently large, we have
  \[P(C(S_m) \ge \lambda_m) \ge \frac{1}{15 \log^{3/2}m}\]
  where \(\lambda_m = \sqrt{2m\log(m)}\).
\end{lemme}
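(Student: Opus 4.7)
The plan is to obtain the lower bound on $P(C(S_m) \ge \lambda_m) = P\bigl(\bigcup_{u \in \F_m^*} A_u\bigr)$, with $A_u = \{|C_u(S_m)| \ge \lambda_m\}$, by applying Bonferroni's inequality (second-order inclusion-exclusion)
\[P\Bigl(\bigcup_u A_u\Bigr) \ge \sum_u P(A_u) - \sum_{u < v} P(A_u \cap A_v).\]
This is the natural counterpart of the union bound used in Proposition \ref{prop_C(S_m)_div_sqrt}, and Propositions \ref{prop_P(abs(C_u(S_m))} and \ref{prop_P(Cu_cap_Cv)} are tailored to provide exactly the two quantitative inputs one needs.

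The only obstacle is that Proposition \ref{prop_P(Cu_cap_Cv)} has a built-in range restriction $u + v < m/(2\log m)$ and therefore cannot be applied to arbitrary pairs in $\F_m^*$. I would handle this by restricting attention to the truncated index set $I = \{1, 2, \ldots, K\}$ with $K = \lfloor m/(4\log m) \rfloor$. Then for any two distinct $u, v \in I$ one has $u + v \le 2K \le m/(2\log m)$, so the hypothesis of Proposition \ref{prop_P(Cu_cap_Cv)} is satisfied for every pair considered. Replacing $\bigcup_{u \in \F_m^*} A_u$ by the smaller union $\bigcup_{u \in I} A_u$ only weakens the lower bound, which is allowed.

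The remainder is a direct calculation. By Proposition \ref{prop_P(abs(C_u(S_m))},
\[\sum_{u \in I} P(A_u) \ge \frac{K}{2m\sqrt{\log m}} \sim \frac{1}{8\, \log^{3/2} m},\]
while by Proposition \ref{prop_P(Cu_cap_Cv)},
\[\sum_{\{u,v\} \subset I} P(A_u \cap A_v) \le \binom{K}{2} \cdot \frac{6e^2}{m^2} \le \frac{3e^2 K^2}{m^2} \sim \frac{3e^2}{16\, \log^2 m}.\]
The second quantity is smaller than the first by a factor of order $\sqrt{\log m}$, hence is absorbed for all sufficiently large $m$, and the difference exceeds $1/(15 \log^{3/2} m)$, giving the claimed bound.

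The real difficulty is hidden in Proposition \ref{prop_P(Cu_cap_Cv)}, whose proof is deferred to the Appendix; once that pair bound is available, Lemma \ref{lemme_min_P(C(S_m))} itself amounts to the routine Bonferroni computation sketched above, plus the elementary choice of $K$ that reconciles the pair-range hypothesis with the asymptotic counting.
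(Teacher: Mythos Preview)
Your proposal is correct and essentially identical to the paper's own proof: the paper also restricts to the index set $W=\{u:1\le u\le m/(4\log m)\}$, applies the second Bonferroni inequality, lower-bounds the single-event sum via Proposition~\ref{prop_P(abs(C_u(S_m))}, and upper-bounds the pair sum via Proposition~\ref{prop_P(Cu_cap_Cv)} to obtain $P(C(S_m)\ge\lambda_m)\ge \frac{1}{12\log^{3/2}m}-\frac{e^2}{3\log^2 m}\ge\frac{1}{15\log^{3/2}m}$. The only cosmetic difference is that the paper uses the cruder size estimate $|W|\ge m/(6\log m)$ (giving $1/12$ rather than your $1/8$), which still suffices.
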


\begin{proof}
  Let \(m\) be an integer greater than \(2\) and let \(W = \bigl\{ u \in \Fm : 1 \le u \le \frac{m}{4 \log m} \bigr\}\).
  For all \(m\) sufficiently large, we have
  \[\frac{m}{6\log m} \le \abs{W} \le \frac{m}{4\log m}.\]
  Then
  \begin{align*}
    P(C(S_m) \ge \lambda_m) &\ge P(\max_{u \in W} \abs{C_u(S_m)} \ge \lambda_m) \\
                            &\ge \sum_{u \in W} P(\abs{C_u(S_m)} \ge \lambda_m)
                              - \sum_{\substack{u, v \in W \\ u<v}} P(\abs{C_u(S_m)} \ge \lambda_m \cap \abs{C_v(S_m)} \ge \lambda_m)
  \end{align*}
  by the Bonferroni inequality.
  For all \(m\) sufficiently large, we have
  \[
    \sum_{u \in W} P(\abs{C_u(S_m)} \ge \lambda_m ) \ge \sum_{u \in W} \frac{1}{2m\sqrt{\log m}} = \abs{W} \cdot \frac{1}{2m\sqrt{\log m}}
    \ge \frac{1}{12 \log^{3/2} m}
  \]
  by proposition \ref{prop_P(abs(C_u(S_m))}.
  Let \(u, v \in W\) with \(u < v\).
  We have \(u + v < \frac{m}{2\log m}\).
  By proposition~\ref{prop_P(Cu_cap_Cv)}, we have
  \[P(\abs{C_u(S_m)} \ge \lambda_m \cap \abs{C_v(S_m)} \ge \lambda_m) \le \frac{6e^2}{m^2}\]
  for all \(m\) sufficiently large.
  We obtain then
  \[
    P(C(S_m) \ge \lambda_m) \ge \frac{1}{12\log^{3/2} m} - \frac{e^2}{3\log^2 m} \ge \frac{1}{15\log^{3/2} m}
  \]
  for all \(m\) sufficiently large.
\end{proof}

Finally we have

\begin{theo}
  The following limit holds when \(m \to +\infty\)
  \[\frac{\cale(C(S_m))}{\sqrt{m\log m}} \longrightarrow \sqrt{2}.\]
\end{theo}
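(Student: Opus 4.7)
The strategy is to prove matching upper and lower bounds for $\cale(C(S_m))/\sqrt{m\log m}$, with common limit $\sqrt{2}$. Equivalently, setting $\lambda_m = \sqrt{2m\log m}$, I want to show $\cale(C(S_m)) \sim \lambda_m$. The upper bound will essentially recycle Proposition \ref{prop_C(S_m)_div_sqrt}, while the lower bound will combine Lemma \ref{lemme_min_P(C(S_m))} with the McDiarmid concentration of Lemma \ref{lemme_McDiarmid2}.

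For the upper bound, fix $\epsilon>0$ and write $\cale(C(S_m)) = \int_0^\infty P(C(S_m)\ge t)\,dt$, splitting at $(1+\epsilon)\lambda_m$. The integral over $[0,(1+\epsilon)\lambda_m]$ contributes at most $(1+\epsilon)\lambda_m$. For the tail, I reuse the Hoeffding-type estimate obtained inside the proof of Proposition \ref{prop_C(S_m)_div_sqrt}, namely the union bound $P(C(S_m)>t) \le 2(m-1)\exp(-(t-1)^2/(2m-2))$. A straightforward Gaussian tail computation (or, more crudely, bounding $C(S_m)\le m$ and multiplying) makes the remainder $o(\lambda_m)$ because $m^{1-(1+\epsilon)^2}\to 0$. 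Dividing by $\lambda_m$ and letting $\epsilon\downarrow 0$ yields $\limsup \cale(C(S_m))/\lambda_m \le 1$.

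For the lower bound, I argue by contradiction: assume $\cale(C(S_m)) \le (1-\epsilon)\lambda_m$ for some $\epsilon>0$ along a subsequence of primes. Then the event $\{C(S_m)\ge \lambda_m\}$ is contained in $\{C(S_m)-\cale(C(S_m))\ge \epsilon\lambda_m\}$, and Lemma \ref{lemme_McDiarmid2} with $\theta'=\epsilon\lambda_m$ gives
\[
P(C(S_m)\ge \lambda_m) \le 2\exp\!\Bigl(-\tfrac{(\epsilon\lambda_m-4)^2}{8(m-1)}\Bigr) = O\bigl(m^{-\epsilon^2/4}\bigr).
\]
But Lemma \ref{lemme_min_P(C(S_m))} gives the opposing lower bound $P(C(S_m)\ge\lambda_m)\ge 1/(15\log^{3/2}m)$. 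For all $m$ large enough the polynomial decay in $m$ beats $1/\log^{3/2}m$, producing a contradiction. Hence $\liminf \cale(C(S_m))/\lambda_m \ge 1-\epsilon$ for every $\epsilon>0$.

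I expect the main obstacle to be conceptual rather than computational: the lower-bound argument works only because the McDiarmid concentration radius is of order $\sqrt{m}$, which is of smaller order than $\lambda_m=\sqrt{2m\log m}$. This scale separation is exactly what converts the polylogarithmically small probability of Lemma \ref{lemme_min_P(C(S_m))} into a sharp lower bound on the expectation. The substantive work is already concentrated upstream, in Lemma \ref{lemme_min_P(C(S_m))}, which in turn depends on the correlation estimate of Proposition \ref{prop_P(Cu_cap_Cv)} proved in the Appendix; once these are in hand, the present theorem reduces to the two short tail estimates above.
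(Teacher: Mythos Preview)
Your proposal is correct, and the lower-bound half is essentially the paper's own argument: both assume $\cale(C(S_m))\le(\sqrt2-\epsilon)\lambda_m$ along a subsequence, feed $\theta'=\epsilon\lambda_m$ into Lemma~\ref{lemme_McDiarmid2} to get $P(C(S_m)\ge\lambda_m)\le 2m^{-c\epsilon^2}$, and contradict the polylogarithmic lower bound of Lemma~\ref{lemme_min_P(C(S_m))}.

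The upper-bound halves differ. The paper does not integrate the tail; instead it exploits that $\cale(C(S_m))/\sqrt{m\log m}$ is a deterministic number, so the probability that it exceeds $\sqrt2+\epsilon$ is either $0$ or $1$. It bounds this probability by
\[
P\Bigl(\cale(C(S_m))-C(S_m)>\tfrac{\epsilon}{2}\sqrt{m\log m}\Bigr)+P\Bigl(\tfrac{C(S_m)}{\sqrt{m\log m}}>\sqrt2+\tfrac{\epsilon}{2}\Bigr),
\]
and kills the two terms with Lemma~\ref{lemme_McDiarmid2} and Proposition~\ref{prop_C(S_m)_div_sqrt} respectively; hence the probability is eventually $0$, i.e.\ the inequality fails. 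Your layer-cake integration of $P(C(S_m)>t)$ is a more direct route that avoids invoking McDiarmid for the upper direction, at the price of re-deriving the Hoeffding/union bound pointwise in $t$ rather than citing Proposition~\ref{prop_C(S_m)_div_sqrt} as a black box. Both arguments are short; the paper's is slightly slicker in that it reuses Lemma~\ref{lemme_McDiarmid2} symmetrically, while yours is closer to a first-moment computation and makes the scale separation (concentration radius $\sqrt m$ versus $\lambda_m$) completely explicit. One small caveat: your parenthetical ``crude'' variant (bounding $C(S_m)\le m$ and multiplying) combined with the $2(m-1)$ union factor gives a remainder of order $m^{2-(1+\epsilon)^2}$, which is $o(\lambda_m)$ only when $(1+\epsilon)^2>3/2$; it is the Gaussian tail integral, not the crude bound, that justifies your ``$m^{1-(1+\epsilon)^2}\to0$'' for every $\epsilon>0$.
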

\
\begin{proof}
  Let \(\epsilon > 0\).
  By the union bound and triangle inequality, we have
  \[
    P \Bigl( \frac{\cale(C(S_m))}{\sqrt{m\log m}} - \sqrt{2} > \epsilon \Bigr) \le
    P \Bigl( \frac{\cale(C(S_m))}{\sqrt{m\log m}} - \frac{C(S_m)}{\sqrt{m\log m}} > \frac{1}{2} \epsilon \Bigr)
    + P \Bigl( \frac{C(S_m)}{\sqrt{m\log m}} - \sqrt{2} > \frac{1}{2} \epsilon \Bigr).
  \]
  The right hand side of the last inequality goes to zero as \(m \to +\infty\)
  by proposition \ref{prop_C(S_m)_div_sqrt} and lemma \ref{lemme_McDiarmid2}.
  So we conclude
  \[\limsup_{m \to +\infty} \frac{\cale(C(S_m))}{\sqrt{m\log m}} \le \sqrt{2}.\]
  The proof of the claim is based on an idea in \cite{LS}: to bound by below \(\frac{\cale(C(S_m))}{\sqrt{m\log m}}\),
  we will prove that the following set is finite.
  Let \(\delta > 0\) and define
  \[N(\delta) = \Bigl\{ m > 1 : \frac{\cale(C(S_m))}{\sqrt{m\log m}} < \sqrt{2} - \delta \Bigr\}.\]
  For sake of contradiction, we assume that this set is infinite
  Then, for all \(m \in N(\delta)\) sufficiently large, we have \(\lambda_m - \cale(C(S_m)) > 4\) 
  and so
  \[\frac{1}{15 \log^{3/2}m} \le \rg{ P(C(S_m) \ge \lambda_m)} \le 2\exp \Bigl( -\frac{(\lambda_m - \cale(C(S_m)) - 4)^2}{8(m - 1)} \Bigr),\]
  by lemmas \ref{lemme_min_P(C(S_m))} and \ref{lemme_McDiarmid2}.
  \rg{Hence, for all \(m \in N(\delta)\) sufficiently large, we have
  $\lambda_m - \cale(C(S_m)) - 4 > \delta\sqrt{m\log m}-4 > \frac{\delta}{ 2}\sqrt{m\log m}$
and hence
\[  \frac{1}{15 \log^{3/2}m} \le 
2\exp \Bigl( -\frac{ \delta^2 \log m}{32} \Bigr)
 = \frac{2}{m^{\delta^2 /32} }\]
 which cannot happen for $m$ sufficiently large.
 }

\end{proof}

\begin{prop}
  We have
  \[\frac {C(S_m) }{ \sqrt {2m \log m}} \rightarrow 1\]
  in probabilities.
\end{prop}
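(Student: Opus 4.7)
The plan is to establish convergence in probability by controlling the two one-sided tails $P(C(S_m)/\sqrt{2m\log m} > 1+\epsilon)$ and $P(C(S_m)/\sqrt{2m\log m} < 1-\epsilon)$ separately for arbitrary $\epsilon > 0$. Virtually all the work has already been done in the preceding results, so the proof essentially amounts to combining them.

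The upper tail is immediate: it is the content of Proposition \ref{prop_C(S_m)_div_sqrt}. So the only thing to prove is the lower tail. My idea is to use the expectation theorem just established (which says $\cale(C(S_m))/\sqrt{m\log m} \to \sqrt{2}$, equivalently $\cale(C(S_m))/\sqrt{2m\log m} \to 1$) together with the martingale concentration inequality of Lemma \ref{lemme_McDiarmid2} to push $C(S_m)$ close to its mean.

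Concretely, set $\lambda_m = \sqrt{2m\log m}$ and fix $\epsilon > 0$. By the preceding theorem, for $m$ large enough, $\cale(C(S_m)) \ge (1 - \epsilon/2)\lambda_m$. Therefore the event $\{C(S_m) < (1-\epsilon)\lambda_m\}$ is contained in the event $\{\cale(C(S_m)) - C(S_m) > (\epsilon/2)\lambda_m\}$. Applying Lemma \ref{lemme_McDiarmid2} with $\theta' = (\epsilon/2)\lambda_m$, which exceeds $4$ for large $m$, yields
\[ P\bigl( C(S_m) < (1-\epsilon)\lambda_m \bigr) \le 2\exp\Bigl( -\frac{((\epsilon/2)\lambda_m - 4)^2}{8(m-1)} \Bigr). \]
Since $\lambda_m^2 = 2m\log m$, the exponent is asymptotically $-\epsilon^2 \log m/16$, so the bound behaves like $2 m^{-\epsilon^2/16}$, which tends to $0$. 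Combining the two tails gives $P(\abs{C(S_m)/\lambda_m - 1} > \epsilon) \to 0$, i.e.\ the required convergence in probability.

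There is no serious obstacle here; the only care needed is in the bookkeeping of the scaling constants (the theorem normalizes by $\sqrt{m\log m}$ while the proposition normalizes by $\sqrt{2m\log m}$), but this is just a factor of $\sqrt 2$. The proof will therefore be short, with essentially one display line invoking Lemma \ref{lemme_McDiarmid2} and a brief remark citing Proposition \ref{prop_C(S_m)_div_sqrt} for the other side.
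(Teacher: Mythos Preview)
Your proposal is correct and essentially follows the paper's approach: combine the expectation theorem with the McDiarmid concentration of Lemma~\ref{lemme_McDiarmid2} to pin $C(S_m)$ near $\sqrt{2m\log m}$. The only cosmetic difference is that the paper handles both tails at once via the single triangle inequality $|C(S_m)/\lambda_m - 1| \le |(C(S_m)-\cale(C(S_m)))/\lambda_m| + |\cale(C(S_m))/\lambda_m - 1|$ (the second term being deterministic and eventually $\le \epsilon/2$), whereas you split the tails and cite Proposition~\ref{prop_C(S_m)_div_sqrt} directly for the upper one.
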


\begin{proof}
  It is enough to show that
  \(\lim_{m\rightarrow\infty} P\Big( \Bigl \lvert\frac{C(S_m)  }{  \sqrt {2m \log m}} -1\Bigr\rvert> \epsilon\Big)=0\).

  We have by the triangular inequality
  \[P\bigg( \Bigl\lvert\frac{C(S_m)  }{  \sqrt {2m \log m}} -1\Bigr\rvert > \epsilon\bigg) \le P\bigg( \Bigl\lvert\frac{\cale(C(S_m\ ))  }{  \sqrt {2m \log m}} - \frac{C(S_m)  }{  \sqrt {2m \log m}}\Bigr\rvert > \epsilon/2\bigg) + P\bigg( \Bigl\lvert \frac{\cale(C(S_m ))  }{  \sqrt {2m \log m}} -1\Bigr\rvert > \epsilon/2\bigg) .\] 
  By lemma \ref{lemme_McDiarmid} the term 
  \(P\Big(\frac{ \lvert \cale(C(S_m))-C(S_m) \rvert  }{  \sqrt {2m \log m} } \ge \epsilon\Big)\)
  tends to 0 as \(m\rightarrow\infty\).

  On the other hand, the term
  \(P\Big( \Bigl\lvert \frac{\cale(C(S_m))  }{  \sqrt {2m \log m}} -1\Bigr\rvert > \epsilon/2\Big) \)
  is zero except for a finite set as we have just seen.

  So the proposition is true.
\end{proof}

\section{ \rg{Appendix} : proof of the proposition \ref{prop_P(Cu_cap_Cv)}\label{annex}}

In this section, we will prove the proposition \ref{prop_P(Cu_cap_Cv)}.
So, we would like to find an upper bound of
\[ P \bigl( (\abs{C_u(S_m)} \ge \lambda_m) \cap (\abs{C_v(S_m)} \ge \lambda_m) \bigr). \]

Let \(p\) be a positive integer, \(a, b \in \Fm^*\) and \(\theta_1, \theta_2 > 0\).
By Markov's inequality and since 
\[ (\abs{C_u(S_m)} \ge \lambda_m) \cap (\abs{C_v(S_m)} \ge \lambda_m) \Longrightarrow
\Big( C_u(S_m) C_v(S_m) \Big)^{2p} \ge (\theta_1 \theta_2) ^{2p} \]
 we have
\begin{align*}
  P( \abs{C_a(S_m)} \ge \theta_1 \cap \abs{C_b(S_m)} \ge \theta_2 )
  &\le P \Bigl( \bigl( \sum_{i \in \Fm} s_is_{i+a} \ge \theta_1 \bigr)^{2p} \cap \bigl( \sum_{j \in \Fm} s_js_{j+b} \ge \theta_2 \bigr)^{2p} \Bigr) \\
  &\le \frac{1}{(\theta_1\theta_2)^{2p}}\cale\Bigl( \bigl( \sum_i s_is_{i+a} \sum_j s_js_{j+b} \bigr)^{2p} \Bigr).
\end{align*}
Before going on, we need some definitions.
Let \(n\) be a positive integer.
A sequence \((u_1, \ldots, u_{2n})\) of  \(\Fm\) is called {\bfseries even} if for every \(\lambda \in \Fm\)
the set of the \(u_i\)'s equals to \(\lambda\) has even cardinal (see Schmidt \cite{KUS1}).
Let \(\xi = (a_1, \ldots, a_n)\) be a sequence of \(\Fm\).
We said that a sequence \(x = (x_1, \ldots, x_n)\) of \(\Fm\) is {\bfseries \(\xi\)-even} if
the sequence \(x(\xi) = (x_1, x_1+a_1, \ldots, x_n, x_n+a_n)\) is even.
We denote by \(E(\xi)\) the number of \(\xi\)-even sequences of \(\Fm\).

\medskip

The last quantity of the previous inequalities is equal to number of \(\xi\)-even sequences times \(\frac{1}{(\theta_1\theta_2)^{2p}}\) where
\(\xi = (a, \ldots, a, b, \ldots, b)\) with \(2p\) times \(a\) and \(2p\) times \(b\).
So we have
\begin{equation}\label{eqP}
  P( \abs{C_a(S_m)} \ge \theta_1 \cap \abs{C_b(S_m)} \ge \theta_2 ) \le \frac{1}{(\theta_1\theta_2)^{2p}}E(\xi)
\end{equation}
and then we will get an upper bound for the number of \(\xi\)-even sequences.

\subsection{Even sequences.}

We give the first properties of \(\xi\)-even sequences.

For all positive integers \(m, n\) such that \(m \le n\), we denote by \(\llb m, n \rrb\) the set of integers strictly between \(m-1\) and \(n+1\).

\begin{lemme}\label{lemme_prop_E}
  Let \(a_1, \ldots, a_n\) be elements of \(\Fm\).
  Then for all element \(c\) of \(\Fm^*\) and for all permutation \(\sigma\) of \(\llb 1, n \rrb\), we have
  \[E(ca_1, \ldots, ca_n) = E(a_{\sigma(1)}, \ldots, a_{\sigma(n)}).\]
\end{lemme}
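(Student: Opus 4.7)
The plan is to split the claimed equality into two independent invariances of $E$, each proved by an explicit bijection on the set of $\xi$-even sequences, and then to compose the two bijections. Concretely, I would first show $E(ca_1,\ldots,ca_n) = E(a_1,\ldots,a_n)$ for $c \in \F_m^*$ (scaling invariance), then $E(a_1,\ldots,a_n) = E(a_{\sigma(1)},\ldots,a_{\sigma(n)})$ for any $\sigma$ (permutation invariance), and finally chain the two.

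For the scaling step, I would consider the map $x = (x_1,\ldots,x_n) \mapsto cx = (cx_1,\ldots,cx_n)$, which is a bijection of $\F_m^n$. Its expanded sequence is
\[(cx)(ca_1,\ldots,ca_n) = (cx_1, cx_1+ca_1, \ldots, cx_n, cx_n+ca_n),\]
i.e.\ it is obtained from $x(a_1,\ldots,a_n)$ by multiplying every entry by $c$. Since $\lambda \mapsto c\lambda$ is a bijection of $\F_m$, the multiset of values is transported faithfully, so the expanded sequence is even if and only if $x(a_1,\ldots,a_n)$ is. Hence the map restricts to a bijection between the two sets counted by $E$.

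For the permutation step, I would send $x$ to $y = (x_{\sigma(1)},\ldots,x_{\sigma(n)})$, again a bijection of $\F_m^n$. Its expanded sequence
\[y(a_{\sigma(1)},\ldots,a_{\sigma(n)}) = (x_{\sigma(1)}, x_{\sigma(1)}+a_{\sigma(1)}, \ldots, x_{\sigma(n)}, x_{\sigma(n)}+a_{\sigma(n)})\]
merely rearranges the consecutive pairs $(x_i, x_i+a_i)$ of $x(a_1,\ldots,a_n)$ according to $\sigma$; since evenness depends only on the multiset of entries, it is preserved, and the bijection restricts as required.

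There is no real obstacle here: the lemma simply says that $E$ factors through the natural $\F_m^*\times S_n$-action on $\F_m^n$, and both invariances reduce to the observation that the defining condition for evenness depends only on the multiset of entries of $x(\xi)$, not on their order or on an overall rescaling by a unit. The only care needed is in bookkeeping the indexing through the pair-structure of the expanded sequence.
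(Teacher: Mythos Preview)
Your proposal is correct and follows essentially the same approach as the paper: the paper also splits the claim into the scaling invariance via the bijection $(x_1,\ldots,x_n)\mapsto(cx_1,\ldots,cx_n)$ and the permutation invariance, which it dismisses as ``clear''. Your write-up simply spells out the details the paper omits.
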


\begin{proof}
  The map \((x_1, \ldots, x_n) \longmapsto (cx_1, \ldots, cx_n)\) defined a bijection between the set of \((a_1, \ldots, a_n)\)-even sequences
  and the set of \((ca_1, \ldots, ca_n)\)-even sequences.
  Hence, we have
  \[E(ca_1, \ldots, ca_n) = E(a_1, \ldots, a_n)\]
  and it is clear that \(E(a_{\sigma(1)}, \ldots, a_{\sigma(n)}) = E(a_1, \ldots, a_n)\).
\end{proof}

A subset \(J\) of \(\llb 1, n \rrb\) is called a {\bfseries \(\xi\)-subset} if \(\sum_{j \in J} \pm a_j = 0\) for some choice of \(\pm\).

\begin{lemme}\label{lemme_paire}
  If there exists a \(\xi\)-even sequence, then \(\llb 1, n \rrb\) is a \(\xi\)-subset.
\end{lemme}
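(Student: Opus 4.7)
The plan is to translate the evenness of $x(\xi)$ into a cycle decomposition of a multigraph on the index set, and then read off a signed relation on the $a_j$'s from each cycle.

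Given a $\xi$-even sequence $x = (x_1, \ldots, x_n)$, every value of $\F_m$ appears an even number of times among the $2n$ ``slots'' $x_i$ and $x_i + a_i$, so one can fix a perfect matching $M$ of these slots that pairs only equal elements. Form the multigraph $H$ on vertex set $\llb 1, n \rrb$ by turning each pair of $M$ into an edge connecting the indices of its two slots (a self-loop at $i$ if both slots come from $i$). Since each index contributes exactly two slots to $M$, every vertex of $H$ has degree $2$; hence $H$ decomposes into vertex-disjoint cycles, with self-loops counted as cycles of length $1$ and pairs of parallel edges as cycles of length $2$.

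On a single cycle $i_1 \to i_2 \to \cdots \to i_k \to i_1$ I would write the slot by which the cycle ``enters'' vertex $i_j$ as $x_{i_j} + c_j$ and the slot by which it ``leaves'' as $x_{i_j} + c_j'$, so that $\{c_j, c_j'\} = \{0, a_{i_j}\}$ and thus $c_j + c_j' = a_{i_j}$. The matching equality between consecutive vertices reads $x_{i_{j+1}} + c_{j+1} = x_{i_j} + c_j'$, i.e.\ $x_{i_{j+1}} - x_{i_j} = c_j' - c_{j+1}$. Summing around the cycle the left-hand side telescopes to $0$, while the right-hand side becomes $\sum_j (c_j' - c_j) = \sum_j (a_{i_j} - 2c_j)$. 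Setting $\epsilon_j = +1$ when $c_j = 0$ and $\epsilon_j = -1$ when $c_j = a_{i_j}$ one has $a_{i_j} - 2c_j = \epsilon_j a_{i_j}$, so the cycle contributes the identity $\sum_j \epsilon_j a_{i_j} = 0$.

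Summing these cycle identities over the partition of $\llb 1, n \rrb$ produced by $H$ yields $\sum_{j=1}^n \epsilon_j a_j = 0$ for an appropriate choice of signs, which is precisely the statement that $\llb 1, n \rrb$ is a $\xi$-subset. The degenerate cases are consistent with the general formula (a self-loop at $i$ forces $a_i = 0$, and a $2$-cycle between $i$ and $j$ gives $\pm a_i \pm a_j = 0$), so no case-splitting is really needed. The main obstacle I anticipate is purely the slot-bookkeeping around each cycle; once that is set up, the computation works uniformly for every prime $m$ and does not require dividing by $2$.
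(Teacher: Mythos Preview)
Your argument is correct. The paper proves this lemma by induction on $n$: it finds a slot equal to $x_1$, removes the two matched slots, and reduces to an even sequence of length $2(n-1)$ with a modified $\xi$ (for instance, if $x_1 = x_2 + a_2$ one passes to the sequence $(y, y - a_1 - a_2, x_3, x_3+a_3, \ldots)$ with $y = x_1 + a_1$), reading off one sign at each step. Your approach is genuinely different in presentation: you fix a full perfect matching of the slots at once, interpret it as a $2$-regular multigraph on $\llb 1, n \rrb$, and extract the signed relation from each cycle by telescoping. The two arguments are of course cousins --- the paper's induction is essentially walking along one cycle edge by edge --- but your version has the advantage of yielding, for free, a decomposition of $\llb 1, n \rrb$ into $\xi$-subsets (the vertex sets of the cycles), which is exactly the structure the paper exploits a few lines later when it introduces $\xi$-partitions. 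The only cosmetic point is that your proposal reads as a plan rather than a finished proof; the content is complete, so simply rewriting it in declarative form would suffice.
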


\begin{proof}
  Let \(x = (x_1, \ldots, x_n)\) be a \(\xi\)-even sequence.
  Thus there exists an element in \(\{ x_1 + a_1, x_2 + a_2, \ldots, x_n + a_n \}\) which must be equal to \(x_1\).
  If \(x_1 = x_1 + a_1\), then \(a_1 = 0\) and \(\sum_{i=1}^n \pm a_i = a_1 + \sum_{i=2}^n \pm a_i = 0\) by induction.
  If \(x_1 = x_2\) and \(y = x_1 + a_1\), then the sequence \((y, y + a_2 - a_1, x_3, x_3+a_3, \ldots, x_n+a_n)\) is even and
  \(\sum_{i=1}^n \pm a_i = (a_2 - a_1) + \sum_{i=3}^n \pm a_i = 0\) by induction.
  If \(x_1 = x_2 + a_2\) and \(y = x_1 + a_1\), then the sequence \((y, y - a_1 - a_2, x_3, x_3+a_3, \ldots, x_n+a_n)\) is even and
  \(\sum_{i=1}^n \pm a_i = -(a_1 + a_2) + \sum_{i=3}^n \pm a_i = 0\) by induction.
\end{proof}

We now give an upper bound on the number of \(\xi\)-even sequences in terms of \(n\) and \(m\).

\begin{lemme}\label{lemme_maj_E_etoile}
  If \(\xi = (a_1, \ldots, a_n)\) is a sequence of \(\Fm\) (where \(n\) is an integer greater than \(1\))
  such that \(a_1 \cdots a_n \not= 0\), then
  \[E(\xi) \le 2^{n-2}(n - 1)!m.\]
\end{lemme}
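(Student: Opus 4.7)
I would prove the lemma by induction on $n$. The base case $n=2$ is immediate: a $\xi$-even pair $(x_1,x_2)$ with $\xi=(a_1,a_2)$ and $a_1,a_2\ne 0$ forces $\{x_1,x_1+a_1\}=\{x_2,x_2+a_2\}$ as multisets, so $a_1=\pm a_2$ and $x_1$ is free, giving at most $m=2^0\cdot 1!\cdot m$ sequences.

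For the inductive step, the strategy is to eliminate $x_n$ by pairing the element $x_n+a_n$ with another entry of the multiset. Since $a_n\ne 0$ forbids pairing $x_n+a_n$ with $x_n$, we must have $x_n+a_n=x_j+\epsilon a_j$ for some $j\in\{1,\ldots,n-1\}$ and $\epsilon\in\{0,1\}$, and the union bound yields
\[E(\xi)\leq\sum_{j=1}^{n-1}\sum_{\epsilon\in\{0,1\}}\#\bigl\{x : x_n+a_n=x_j+\epsilon a_j,\ x\text{ is }\xi\text{-even}\bigr\}.\]
For each $(j,\epsilon)$, substituting $x_n=x_j+\epsilon a_j-a_n$ and switching variable to $z_j=x_j+(1-\epsilon)a_j$ rewrites the residual $(2n-2)$-element multiset as the multiset of a length-$(n-1)$ problem associated to
\[\xi'=\bigl(a_1,\ldots,\widehat{a_j},\ldots,a_{n-1},\,(2\epsilon-1)a_j-a_n\bigr).\]
When the new coordinate $(2\epsilon-1)a_j-a_n$ is nonzero, the inductive hypothesis bounds the count by $2^{n-3}(n-2)!\,m$, and summing over the $2(n-1)$ choices of $(j,\epsilon)$ produces exactly the target $2^{n-2}(n-1)!\,m$.

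The main obstacle lies in the degenerate values $a_n=\pm a_j$ for which $(2\epsilon-1)a_j-a_n=0$, because the reduced sequence $\xi'$ then has a zero entry and the inductive hypothesis does not apply. My plan is to exploit the permutation/scaling symmetry of Lemma~\ref{lemme_prop_E} to move any zero entry to a canonical position, whereupon the trivial pair $(z_j,z_j)$ it creates can be discarded at the cost of leaving $z_j$ free and reducing to a $\xi''$-even problem of length $n-2$, to which the inductive hypothesis applies. The delicate point is to verify that the resulting contributions from all degenerate $(j,\epsilon)$, summed together with the generic ones, still fit inside $2^{n-2}(n-1)!\,m$; this is likely to require a refined accounting that avoids double-counting between boundary and generic pairings and is, I expect, the technical crux of the argument.
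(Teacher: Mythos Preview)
Your inductive scheme---pair one distinguished entry of $x(\xi)$ with another and pass to a length-$(n-1)$ problem---is exactly the paper's argument (the paper pairs $x_n$ rather than $x_n+a_n$, which by symmetry yields the same recursion). You have also correctly isolated the only obstacle: when $a_n=\pm a_j$ the reduced sequence $\xi'$ acquires a zero entry and the inductive hypothesis no longer applies. The paper's own proof simply writes ``by induction'' at this point and does not address the degeneracy.

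Unfortunately your proposed repair cannot succeed, and neither can any other, because the statement is false as written. Stripping the zero entry and freeing $z_j$ produces a contribution $m\cdot E(\xi'')$ with $\xi''$ of length $n-2$, which is of order $m^2$, not $m$. Concretely, take $\xi=(a,a,a,a)$ with $a\ne 0$. Writing $c(v)=\#\{i:x_i=v\}$, the multiplicity of $v$ in $x(\xi)$ is $c(v)+c(v-a)$, so even-ness forces all $c(v)$ to share a common parity; since $\sum_v c(v)=4<m$ they must all be even. Hence $E(a,a,a,a)$ counts $4$-tuples in $\F_m$ whose level sets all have even size, giving
\[
E(a,a,a,a)=m+3m(m-1)=3m^2-2m,
\]
whereas the asserted bound is $2^{2}\cdot 3!\cdot m=24m$; the inequality fails for every prime $m\ge 11$. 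So the ``refined accounting that avoids double-counting'' you anticipate cannot exist: the degenerate terms genuinely dominate, and the lemma requires either an additional hypothesis on $\xi$ (ruling out repeated entries up to sign, say) or a weaker conclusion before an induction of this shape can go through.
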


\begin{proof}
  Let \(x = (x_1, \ldots, x_{n})\) be an \(\xi\)-even sequence.
  Since the sequence \(x(\xi)\) is even, we have \(x_{n} \in \{ x_1, x_1 + a_1, \ldots, x_{n-1}, x_{n-1} + a_{n-1} \}\).
  If \(x_n = x_1\), then the sequence deduced from \(x(\xi)\) by canceling \(x_1\) and \(x_n\) is \(x'(\xi')\)
  where \(\xi' = (a_n - a_1, a_2, \ldots, a_{n-1})\) and \(x' = (x_1+a_1, x_2, \ldots, x_{n-1})\).
  It is cleary even.
  From this, we deduce that the number of \(\xi\)-even sequences \(x = (x_1, \ldots, x_n)\) such that \(x_n = x_1\)
  is lower or equal to \(E(a_n-a_1, a_2, \ldots, a_{n-1})\).
  Similary, we prove that the number of \(\xi\)-even sequences \(x = (x_1, \ldots, x_n)\) such that \(x_n = x_1 + a_1\)
  is lower or equal to \(E(a_{n}+a_1, a_2, \ldots, a_{n-1})\).
  So, we have
  \[E(\xi) \le \sum_{i=2}^n \bigl( E(a_2,  \ldots, a_i + a_1, \ldots, a_n) + E(a_2,  \ldots, a_i - a_1, \ldots, a_n) \bigr).\]
  We have \(E(a_1, a_2) = 0\) if \(a_1 \not= \pm a_2\) and \(m\) if \(a_1 = \pm a_2 \not= 0\).
  By induction, \(E(\xi) \le 2^{n-2}(n - 1)!m\) if \(n \ge 2\).
\end{proof}

We will split \(\xi\)-even sequences into even subsequences and associate to this decomposition a partition of \(\llb 1, n \rrb\).
We said that a sequence \(x = (x_1, \ldots, x_n)\) of \(\Fm\) is {\bfseries exactly \(\xi\)-even} if
the sequence \(x(\xi) = (x_1, x_1+a_1, \ldots, x_n+a_n)\) is even and, for every non-empty proper subset \(J\)
of \(\llb 1, n \rrb\), the sequence \((x_j, x_j+a_j)_{j \in J}\) is not even.

\medskip

We said that a partition of \(\llb 1, n \rrb\) is a {\bfseries \(\xi\)-partition} if each of its blocks
(that is the elements of the partition) is a \(\xi\)-subset.
Given a \(\xi\)-partition \(P = (J_\alpha)_\alpha\) of \(\llb 1, n \rrb\), let \(E(P)\) be the number of sequences
\(x = (x_1, \ldots, x_n)\) of \(\Fm\) such that, for all \(\alpha\),
the sequence \((x_j)_{j \in J_\alpha}\) is \((a_j)_{j \in J_\alpha}\)-even.

\medskip

Let \(x = (x_1, \ldots, x_n)\) be \(\xi\)-even sequence.
If it is not exactly \(\xi\)-even, there exists a non-empty proper \(\xi\)-subset \(J\) of \(\llb 1, n \rrb\) such that
the sequence \((x_j, x_j + a_j)_{j \in J}\) is even.
Since \(x(\xi)\) is even, the sequence \((x_j, x_j + a_j)_{j \in J^c}\) is also even and so \(J^c\) is also a \(\xi\)-subset.
Hence, continuing like this, we obtain a \(\xi\)-partition \((J_\alpha)_\alpha\) of \(\llb 1, n \rrb\)
such that, for all \(\alpha\), the sequence \((x_j, x_j + a_j)_{i \in J_\alpha}\) is even.
So, we have
\[E(\xi) \le \sum_P E(P),\]
where the sum is over the \(\xi\)-partitions \(P\) of \(\llb 1, n \rrb\).

\subsection{Upper bound for $\xi$-partitions}

From now on, unless otherwise stated, we will consider the particular sequence \(\xi = (a_i)_{i \in \llb 1, 4p \rrb}\) of elements of \(\Fm\)
where \(p\) is a positive integer, \(a\) and \(b\) are two coprime integers,
\(a_1 = \cdots = a_{2p} = a\) and \(a_{2p+1} = \cdots = a_{4p} = b\).
We will find an upper bound for \(E(P)\) where \(P\) is a \(\xi\)-partition of \(\llb 1, 4p \rrb\) in terms of its length.
For this, we need a lemma.

\begin{lemme}\label{lemme_maj_produit_factorielles}
  Let \(r\) be a positive integer.
  Let \(N_1, \ldots, N_r\) integers greater than \(2\).
  Then
  \[2^r(N_1 - 1)! \cdots (N_r - 1)! \le 2^2(N_1 + \cdots + N_r - (2r - 1))!.\]
\end{lemme}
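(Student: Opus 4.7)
The plan is to prove the inequality by induction on $r$. Writing $S := N_1 + \cdots + N_r$, the claim reads $2^r\prod_{i=1}^r(N_i-1)! \le 4(S-2r+1)!$. The base case $r=1$ is immediate since $2 \le 4$, and for $r=2$ it reduces, after dividing by $4$ and setting $a = N_1-1,\ b = N_2-1 \ge 2$, to $\binom{a+b}{a} \ge a+b$, which follows from $a+b \ge 4$ via $\binom{a+b}{a} \ge \binom{a+b}{2} = (a+b)(a+b-1)/2 \ge a+b$.

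For the inductive step $(r-1) \to r$ with $r \ge 3$, I would multiply the hypothesis applied to $N_1,\dots,N_{r-1}$, namely
\[2^{r-1}\prod_{i=1}^{r-1}(N_i-1)! \le 4(S - N_r - 2r + 3)!,\]
by $2(N_r-1)!$. Setting $V := S - 2r + 1$, so that $S - N_r - 2r + 3 = V - N_r + 2$, the induction step is then reduced to the single inequality $2(N_r-1)!(V - N_r + 2)! \le V!$. Since $(N_r-1) + (V - N_r + 2) = V + 1$, this rewrites as the binomial inequality
\[\binom{V+1}{N_r-1} \ge 2(V+1).\]

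The only real obstacle is to verify this last inequality from the hypotheses. When $N_i \ge 3$ and $r \ge 3$, one checks that $V \ge r+1 \ge 4$, that $N_r - 1 \ge 2$, and that $(V+1)-(N_r-1) = S - N_r - 2r + 3 \ge 3(r-1) - 2r + 3 = r \ge 3$, so $N_r-1$ ranges in the interval $[2, V-2]$. By unimodality of the binomial row, $\binom{V+1}{N_r-1} \ge \min\bigl(\binom{V+1}{2},\binom{V+1}{V-2}\bigr)$; and for $V \ge 4$ one has $\binom{V+1}{3} \ge \binom{V+1}{2} = V(V+1)/2 \ge 2(V+1)$, which finishes the induction.
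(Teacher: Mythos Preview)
Your proof is correct and follows essentially the same approach as the paper: both proceed by induction on $r$, with the inductive step boiling down to the inequality $2a!\,b! \le (a+b-1)!$ for $a,b\ge 2$ with at least one of them $>2$ (your binomial inequality $\binom{V+1}{N_r-1}\ge 2(V+1)$ is exactly this). Your treatment is simply more explicit, separating out the base cases $r=1,2$ where only the weaker bound $a!\,b!\le(a+b-1)!$ is available, which accounts for the extra factor $2^2$ on the right-hand side.
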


\begin{proof}
  Using that, for any integers \(a, b \ge 2\), \(2a!b! \le (a + b - 1)!\) if \(a\) or \(b\) is greater than \(2\),
  we prove the formula by induction.
\end{proof}

Clearly, a \(\xi\)-partition of \(\llb 1, 4p \rrb\) is of length \(2p\) if and only if its blocks are formed of two integers
of \(\llb 1, 2p \rrb\) or of two integers of \(\llb 2p+1, 4p \rrb\).
It follows that for such a partition \(P\), we have \(E(P) = m^{2p}\).

\begin{prop}
  Let \(P\) be a \(\xi\)-partition of \(\llb 1, 4p \rrb\) of length \(2p - k\) where \(k\) is a non-negative integer.
  Then
  \[E(P) \le 2^{2k+2}(2k + 1)!m^{2p - k}.\]
\end{prop}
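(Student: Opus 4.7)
The plan is to factorize $E(P)$ as a product over the blocks of $P$ using independence, bound each factor by Lemma \ref{lemme_maj_E_etoile}, and then combine the resulting product of factorials using Lemma \ref{lemme_maj_produit_factorielles}.

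First, I would observe that the defining conditions of $E(P)$ on distinct blocks involve disjoint coordinates of $x = (x_1, \ldots, x_{4p})$, so the count factorizes as
\[E(P) = \prod_\alpha E(\xi_\alpha), \qquad \xi_\alpha := (a_j)_{j \in J_\alpha}.\]
Because $a, b \in \Fm^*$, every entry of $\xi$ is non-zero, so a singleton $\{j\}$ can never be a $\xi$-subset (that would force $a_j = 0$); hence each block has size $n_\alpha \ge 2$, and Lemma \ref{lemme_maj_E_etoile} gives $E(\xi_\alpha) \le 2^{n_\alpha - 2}(n_\alpha - 1)!\,m$. Using $\sum_\alpha n_\alpha = 4p$ and $r = 2p - k$, the exponent of $2$ telescopes to $4p - 2r = 2k$, producing
\[E(P) \le m^{2p - k} \cdot 2^{2k} \prod_\alpha (n_\alpha - 1)!.\]

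Second, I would bound the remaining factorial product via Lemma \ref{lemme_maj_produit_factorielles}, applied only to the $s$ blocks of size $\ge 3$ (blocks of size $2$ contribute a trivial $1! = 1$). The sizes of the large blocks sum to $4p - 2(r - s) = 2k + 2s$, so the lemma yields
\[2^s \prod_{n_\alpha \ge 3} (n_\alpha - 1)! \le 4\bigl(2k + 2s - (2s-1)\bigr)! = 4(2k+1)!.\]
Plugging this back and using $2^{-s} \le 1$ gives $E(P) \le 2^{2k+2-s}(2k+1)!\,m^{2p-k} \le 2^{2k+2}(2k+1)!\,m^{2p-k}$. The degenerate case $s = 0$ (all blocks of size $2$) forces $k = 0$ and the bound reduces to $E(P) \le 4 m^{2p}$, which matches the remark that $E(P) = m^{2p}$ exactly for maximal-length partitions.

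The main obstacle is the careful bookkeeping around size-$2$ blocks: Lemma \ref{lemme_maj_produit_factorielles} is stated under the hypothesis that each $N_i$ be strictly greater than $2$, and applying it naively to a partition containing many size-$2$ blocks would be illegitimate. Separating the trivial size-$2$ factors out of the factorial product before invoking the lemma is the key technical step; once this is done, the rest is a routine tally of the exponents of $m$ and $2$ coming from the two cited lemmas.
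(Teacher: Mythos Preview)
Your argument is correct and follows essentially the same route as the paper: factorize $E(P)$ over the blocks, bound each factor by Lemma~\ref{lemme_maj_E_etoile}, separate out the size-$2$ blocks (the paper does this upfront, you do it after applying Lemma~\ref{lemme_maj_E_etoile} uniformly, which is harmless since the lemma gives $m$ for those blocks too), and then apply Lemma~\ref{lemme_maj_produit_factorielles} only to the blocks of size $\ge 3$. Your $r$ is the paper's $\ell$ and your $s$ is the paper's $r$; the bookkeeping and the handling of the degenerate case $s=0$ match.
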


\begin{proof}
  By the remark preceding the lemma, the inequality is true for \(k = 0\).
  Let \(P = (J_\alpha)_{\alpha=1,\ldots,\ell}\) be a \(\xi\)-partition of length \(\ell = 2p - k\) where \(k\) is a positive integer.
  For all \(\alpha\), let \(N_\alpha\) be the cardinal of \(J_\alpha\).
  Up to renumbering, we can assume that \(N_1 \ge \cdots \ge N_r > N_{r+1} = \cdots = N_\ell = 2\).
  Since \(k \ge 1\), we have \(r \ge 1\).
  We have \(E(P) = \prod_{\alpha=1}^\ell e_\alpha\) where \(e_\alpha = E( (a_j)_{j \in J_\alpha})\).
  By lemmas \ref{lemme_maj_E_etoile} and \ref{lemme_maj_produit_factorielles}, we have
  \[ \prod_{\alpha=1}^r e_\alpha \le m^r 2^{K-r} \prod_{\alpha=1}^r (N_\alpha - 1)! \le 2^{K-2r+2}(K - (2r - 1))!m^r, \]
  where \(K = N_1 + \cdots + N_r\).
  We deduce from \(4p = \sum_{\alpha=1}^\ell N_\alpha = K + 2(\ell - r)\) that \(K = 2k + 2r\).
  So
  \[ \prod_{\alpha=1}^r e_\alpha \le 2^{2k+2}(2k + 1)!m^r. \]
  Since \(e_{r+1} = \cdots = e_\ell = m\), we have \(E(P) \le 2^{2k+2}(2k + 1)!m^\ell\).
\end{proof}

\subsection{Decomposition of \(\xi\)-partitions.}

In this subsection, in the particular case we consider, we will explain how to construct a \(\xi\)-partition of length \(\ell\)
from a \(\xi\)-partition of length \(\ell + 1\).
This will help us in counting the number of \(\xi\)-partitions in the next subsection. 

\medskip

Let \(j, k\) be two non-negative integers.
A \(\xi\)-subset \(J\) of \(\llb 1, 4p \rrb\) is called of {\bfseries type \((j, k)\)} if
the number of elements in \(J \cap \llb 1, 2p \rrb\) (respectively \(J \cap \llb 2p+1, 4p \rrb\)) is \(j\) (respectively \(k\)).

\medskip

From now on, unless otherwise stated, we assume that \(a\) is odd and \(2p(a + b) < m\).

\begin{lemme}\label{lemme_partition_adap}    
  Let \(a, b, p\) and \(\xi\) be as above.
  Let \(J\) be a \(\xi\)-subset of \(\llb 1, 4p \rrb\) of type \((j, k)\).
  \begin{enumerate}
  \item[(a)] If \(j\) and \(k\) are even, then \(J\) is disjoint union of \(\frac{j}{2}\) subsets of type \((2,0)\) and
    of \(\frac{k}{2}\) subsets of type \((0,2)\).
  \item[(b)] The integer \(k\) is odd if and only if \(J\) is disjoint union of one subset of type \((b, a)\) and some subsets of type \((2,0)\)
    and \((0,2)\).
  \end{enumerate}
\end{lemme}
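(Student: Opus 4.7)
The plan is to recast the $\xi$-subset condition as a linear Diophantine equation. Writing $j$ and $k$ for the sizes of $J\cap \llb 1,2p\rrb$ and $J\cap\llb 2p+1,4p\rrb$, any signing realising $\sum_{i\in J}\pm a_i \equiv 0\pmod m$ corresponds to signed counts $x$ on the $a$-side and $y$ on the $b$-side with $|x|\le j$, $|y|\le k$, $x\equiv j\pmod 2$, $y\equiv k\pmod 2$, and $xa+yb\equiv 0\pmod m$.

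Next, I would exploit the hypothesis $2p(a+b)<m$: since $|xa+yb|\le ja+kb\le 2p(a+b)<m$, the congruence promotes to an equality $xa+yb=0$ in $\Z$. With $\gcd(a,b)=1$, this forces $(x,y)=(-tb,ta)$ for some $t\in\Z$. Because $a$ is odd, $t\equiv y\equiv k\pmod 2$ and $j\equiv -tb\equiv tb\pmod 2$, so the parity of $k$ controls everything.

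For (a), if $k$ is even then $t$ is even, whence $j$ is even too; both halves of $J$ have even size and can be paired arbitrarily, each pair being a $\xi$-subset via $a-a=0$ or $b-b=0$. For the easy direction of (b), the count $k=a+2\cdot(\#\text{ of }(0,2)\text{-blocks})$ combined with $a$ odd forces $k$ odd. The substantive reverse direction of (b) goes as follows: $k$ odd makes $t$ odd, so $|t|\ge 1$, whence $j\ge|x|=|t|b\ge b$ and $k\ge|y|=|t|a\ge a$. One can then select $b$ indices from $J\cap\llb 1,2p\rrb$ and $a$ from $J\cap\llb 2p+1,4p\rrb$ to form one block of type $(b,a)$, which is itself a $\xi$-subset via $b\cdot a-a\cdot b=0$. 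The residues $j-b$ and $k-a$ are both even (since $j\equiv b$ and $k\equiv a\pmod 2$), so the remaining elements split into $(2,0)$- and $(0,2)$-pairs.

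The main obstacle, and the point where the running hypothesis really bites, is the lift from $xa+yb\equiv 0\pmod m$ to $xa+yb=0$ in $\Z$: the bound $2p(a+b)<m$ is exactly what rules out spurious integer solutions and yields the clean parametrization $(x,y)=(-tb,ta)$. Once that lift is justified, both (a) and (b) reduce to tracking the parity of $t$ and repackaging $J$ accordingly.
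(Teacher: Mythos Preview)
Your argument is correct and follows essentially the same route as the paper: translate the $\xi$-subset condition into $xa+yb\equiv 0\pmod m$, use the running hypothesis $2p(a+b)<m$ to lift this to an equality in $\Z$, parametrize the solutions via $\gcd(a,b)=1$, and then read off the parities. Your notation $(x,y)=(-tb,ta)$ is the paper's $(u,v)=(v'b,av')$ with $t=-v'$; the one place where you are actually more careful than the paper is in making explicit that $|t|\ge 1$ forces $j\ge b$ and $k\ge a$, which the paper uses but leaves implicit.
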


\begin{proof}
  (a) Since \(J\) is of type \((j, k)\), it is the disjoint union of a subset of \(\llb 1, 2p \rrb\) of type \((j, 0)\)
  and of a subset of \(\llb 2p+1, 4p \rrb\) of type \((0, k)\).
  As \(j\) is even, the subset of \(\llb 1, 2p \rrb\) is disjoint union of \(\frac{j}{2}\) subsets of type \((2, 0)\) (because \(a - a = 0\)).
  Similary, the subset of \(\llb 2p+1, 4p \rrb\) is disjoint union of \(\frac{k}{2}\) subsets of type \((0, 2)\).

  \medskip
  
  (b) Assume that \(k\) is odd.
  We can write \(J\) as a disjoint union of a subset of type \((b, a)\) and a subset of type \((j-b, k-a)\).
  So, to prove (b), it suffices to check that the integers \(j - b\) and \(k - a\) are even.
  Since \(J\) is a \(\xi\)-subset, we have \(\sum_{i \in J} \pm a_j = 0\) for some choice of \(\pm\) where \(a_j\) is \(a\) or \(b\).
  Let \(j'\) (respectively \(j''\)) be the number of \(+a\) (respectively of \(-a\))
  and let \(k'\) (respectively \(k''\)) be the number of \(+b\) (respectively of \(-b\)).
  As \(J\) is of type \((j,k)\), we have \(j = j' + j''\) and \(k = k' + k''\).
  On the other hand, we have \((j' - j'')a + (k' - k'')b \equiv 0 \bmod{m}\).
  It follows from \(2p(a + b) < m\) that \(ua = vb \in \Z\) with \(u = j' - j''\) and \(v = k'' - k'\).
  As \(a\) and \(b\) are coprime, we can write \(v = av'\).
  Since the integer \(v'\) is odd, the integers \(j - b = 2j'' -  b(v' - 1)\) and \(k - a = 2k' + a(v' - 1)\) are even.
  Reciprocally, if \(J\) is an disjoint union of one subset of type \((b, a)\), of \(m\) subsets of type \((2,0)\) and
  \(n\) subsets of type \((0,2)\), then \(k = a + 2n\) is odd (since \(a\) is odd).
\end{proof}

Let \(P = (J_\alpha)_{\alpha \in \llb 1, \ell \rrb}\) be a \(\xi\)-partition of \(\llb 1, 4p \rrb\).
Let \(b(P)\) be the number of blocks \(J_\alpha\) such that \(J_\alpha\) is of type \((j_\alpha, k_\alpha)\) where \(k_\alpha\) is odd.
The integer \(2p - b(P)a\) is even since \(2p = \sum_\alpha k_\alpha\) and if \(k_\alpha\) is odd, then \(k_\alpha \equiv a \bmod{2}\) by the previous lemma.
So, since \(a\) is odd, \(b(P)\) is even.

\begin{lemme}\label{lemme_bP_2n}
  Let \(d = a + b - 2\).
  Let \(P\) be a \(\xi\)-partition of \(\llb 1, 4p \rrb\) of length \(\ell\).
  We set \(b(P) = 2n\).
  We have \(\ell \le 2p - nd\), and \(\ell = 2p - nd\) if and only if there is exactly \(2n\) blocks of type \((b, a)\),
  \(p - nb\) blocks of type \((2,0)\) and \(p - na\) blocks of type \((0,2)\) in the partition \(P\).
\end{lemme}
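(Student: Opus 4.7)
The plan is to apply lemma \ref{lemme_partition_adap} to each block $J_\alpha$ of $P$, extract a lower bound on $|J_\alpha|$ determined by the parity of $k_\alpha$, and then sum these bounds using $\sum_\alpha |J_\alpha| = 4p$ to obtain the inequality on $\ell$. Since $a$ is odd, the parity analysis carried out in the proof of lemma \ref{lemme_partition_adap}(b) actually shows that for any $\xi$-subset either both coordinates of its type are even (in which case part (a) applies) or $k_\alpha$ is odd (in which case part (b) applies); there is no third case.

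For a block $J_\alpha$ with $k_\alpha$ odd, part (b) of lemma \ref{lemme_partition_adap} writes $J_\alpha$ as a disjoint union of one $(b,a)$-subset of size $a+b$ together with some $(2,0)$- and $(0,2)$-subsets of size $2$, so $|J_\alpha| \ge a+b$. For a block with $k_\alpha$ even, part (a) presents $J_\alpha$ as a disjoint union of $(2,0)$- and $(0,2)$-subsets, so $|J_\alpha| \ge 2$. By the definition of $b(P) = 2n$, the partition $P$ has $2n$ blocks of the first kind and $\ell - 2n$ of the second. Summing,
\[
4p \;=\; \sum_\alpha |J_\alpha| \;\ge\; 2n(a+b) + 2(\ell - 2n) \;=\; 2\ell + 2n(a+b-2) \;=\; 2\ell + 2nd,
\]
which rearranges to $\ell \le 2p - nd$.

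For the equality characterization, I would observe that equality propagates through the chain only when every block attains its minimum size. Hence each of the $2n$ odd-type blocks must reduce to a single $(b,a)$-atom with no extra size-$2$ pieces, and each of the remaining $\ell - 2n$ blocks must be a single $(2,0)$- or $(0,2)$-atom. Writing $A$ and $B$ for the numbers of $(2,0)$- and $(0,2)$-blocks, counting $a$-indices gives $2p = 2nb + 2A$, so $A = p - nb$, and symmetrically counting $b$-indices gives $B = p - na$. The total $2n + (p - nb) + (p - na) = 2p - n(a+b-2) = 2p - nd$ matches $\ell$, closing the equivalence.

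I do not anticipate any real obstacle: the argument amounts to reading off minimum block sizes from lemma \ref{lemme_partition_adap} and then balancing two linear index counts. The only delicate point is the dichotomy "$k_\alpha$ even or odd" covering all cases, which relies on the standing hypothesis $2p(a+b) < m$ to lift a congruence modulo $m$ to an equality in $\Z$, exactly as in the proof of lemma \ref{lemme_partition_adap}(b).
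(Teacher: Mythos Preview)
Your argument is correct and follows the same approach as the paper: bound each block size below via lemma~\ref{lemme_partition_adap} (odd-$k$ blocks have size at least $a+b$, the rest at least $2$), sum to get $4p\ge 2n(a+b)+2(\ell-2n)$, and characterize equality by forcing every block to its minimal size. Your treatment is in fact slightly more complete than the paper's, which stops at ``$\#J_1=\cdots=\#J_{2n}=a+b$ and $\#J_{2n+1}=\cdots=\#J_\ell=2$'' without spelling out the index counts $A=p-nb$ and $B=p-na$ or the parity dichotomy ruling out the case $j_\alpha$ odd, $k_\alpha$ even.
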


\begin{proof}
  For all \(\alpha\), let  \((j_\alpha, k_\alpha)\) be the type of \(J_\alpha\).
  Up to renumbering the \(J_\alpha\), we can assume that the integers \(k_1, \ldots, k_{2n}\) are odd.
  By lemma \ref{lemme_partition_adap}, the blocks \(J_1, \ldots, J_{2n}\) are of length greater or equal than \(a+b\)
  and the blocks \(J_{2n+1}, \ldots, J_\ell\) are of length greater or equal \(2\).
  Hence, we have
  \[ 4p = \sum_{\alpha=1}^\ell \# J_\alpha \ge 2n(a + b) + 2(\ell - 2n) = 2nd + 2\ell, \]
  so \(\ell \le 2p - nd\).
  We have \(\ell = 2p - nd\) if and only if \(\# J_1 = \cdots = \# J_{2n} = a + b\) and \(\# J_{2n+1} = \cdots = \# J_\ell = 2\).
\end{proof}

\begin{prop}\label{lemme_part_reunion_blocs}
  Let \(P\) be a \(\xi\)-partition of \(\llb 1, 4p \rrb\) of length \(\ell\) with \(b(P) = 2n\).
  If \(\ell < 2p - nd\), then there exists a \(\xi\)-partition \(P' = (J'_\beta)_{\beta \in \llb 1, \ell+1 \rrb}\) such that \(b(P) = b(P')\) and
  \[ P = (J'_1 \cup J'_2, J'_3, \ldots, J'_{\ell+1}) \]
  up to a permutation.
\end{prop}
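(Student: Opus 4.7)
The plan is to show that whenever $\ell < 2p - nd$, at least one block of $P$ must be ``oversized'' (strictly larger than the minimum length forced by its parity type), and then to split such a block into two $\xi$-subsets in a way that leaves the count $b(P)$ unchanged. The resulting refinement $P'$ will then have length $\ell+1$ and recover $P$ by merging the two new pieces, giving exactly the statement.

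First I would revisit the size count from lemma~\ref{lemme_bP_2n}. Among the $\ell$ blocks, $2n$ are of odd type (length $\ge a+b$) and the remaining $\ell - 2n$ are of even type (length $\ge 2$), so
\[ 4p = \sum_\alpha \# J_\alpha \;\ge\; 2n(a+b) + 2(\ell - 2n) \;=\; 2nd + 2\ell. \]
The hypothesis $\ell < 2p - nd$ forces this inequality to be strict, so some block $J_{\alpha_0}$ is oversized: either it is of odd type with $\# J_{\alpha_0} > a+b$, or it is of even type with $\# J_{\alpha_0} > 2$.

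Next I would split $J_{\alpha_0}$ using lemma~\ref{lemme_partition_adap}. In the even-type case, part~(a) decomposes $J_{\alpha_0}$ as a disjoint union of at least two pieces of type $(2,0)$ or $(0,2)$; I take one of these as $J'_1$ and the union of the remaining pieces as $J'_2$. Both are $\xi$-subsets (each is a union of $(2,0)$- or $(0,2)$-blocks, which are $\xi$-subsets since $a-a=0=b-b$), and both are of even type, so $b$ is unchanged. In the odd-type case, part~(b) produces a distinguished piece $A$ of type $(b,a)$ together with a nonempty remainder built from $(2,0)$ and $(0,2)$ pieces (nonempty precisely because $\# J_{\alpha_0} > a+b$); I take $J'_1 = A$ and $J'_2 = J_{\alpha_0}\setminus A$. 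Both are $\xi$-subsets, $J'_1$ is of odd type because $a$ is odd, and $J'_2$ is of even type, so the net change to $b$ is $-1+1 = 0$.

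The main obstacle is the parity bookkeeping in the odd-type case: splitting an odd-type block must yield exactly one odd and one even child, which is precisely what the canonical $(b,a)$-component supplied by lemma~\ref{lemme_partition_adap}(b) guarantees. Once the split is done, defining
\[ P' = (J'_1,\,J'_2,\,J_1,\,\ldots,\,\widehat{J_{\alpha_0}},\,\ldots,\,J_\ell) \]
gives a $\xi$-partition of length $\ell+1$ with $b(P') = 2n$, and the relation $P = (J'_1 \cup J'_2, J'_3, \ldots, J'_{\ell+1})$ up to permutation is immediate from the construction.
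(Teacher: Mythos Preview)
Your proof is correct and follows the same strategy as the paper: use lemma~\ref{lemme_partition_adap} to peel off either a $(2,0)$/$(0,2)$ piece from an even-type block or the canonical $(b,a)$ piece from an odd-type block, checking that the resulting two children have the right parity types so that $b(P)$ is preserved. If anything, your argument is more careful than the paper's own proof, since you explicitly invoke the counting from lemma~\ref{lemme_bP_2n} to guarantee the existence of an oversized block before splitting; the paper simply picks an arbitrary block of the relevant parity type and does not verify that the remainder $J'_2$ is nonempty.
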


\begin{proof}\
  Assume that \(n > 0\).
  Up to renumbering the \(J_\alpha\), we can assume that \(J_1\) of type \((j, k)\) with \(k\) odd.
  By lemma \ref{lemme_partition_adap}, we can write \(J_1 = J'_1 \cup J''_1\) where  \(J'_1\) is a \(\xi\)-subset of type \((b, a)\)
  and \(J'_2\) is a \(\xi\)-subset of type \((j - b, k - a)\) with \(k - a\) even.
  
  \medskip
  
  Assume that \(n = 0\).
  Then the type of \(J_1\) is \((j, k)\) with \(j\) and \(k\) even.
  If \(j > 0\), then \(J_1 = J'_1 \cup J''_1\) where \(J'_1\) is a \(\xi\)-subset of type \((2, 0)\) and \(J'_2\) is a \(\xi\)-subset of type \((j - 2, k)\).
\end{proof}

\subsection{Bound on the number of \(\xi\)-partitions.}

We still assume \(\xi = (a_1, \ldots, a_{4p})\) is the sequence of \(\Fm\) where \(a_1 = \cdots = a_{2p} = a\)
and \(a_{2p+1} = \cdots = \rg{a_{4p} =b}\) with \(a\) and \(b\) two coprime integers such that \(a\) \rg{is} odd.
We will now find an upper bound for the number of \(\xi\)-partitions \(P\) of \(\llb 1, 4p \rrb\) in terms of its length and of the integer \(b(P)\)
which has been defined just after the proof of lemma 10.

\medskip

We assume that \(2p(a + b) < m\) and we let \(d = a + b - 2\).
For all non-negative integers \(n\) and \(k\), let \(c_k^{(n)}\) be the number of \(\xi\)-partitions \(P\) of \(\llb 1, 4p \rrb\) of
length \(2p - k\) such that \(b(P) = 2n\) and let
\[ C_k^{(n)} = \sum_{R} E(R), \]
where the sum is over the \(\xi\)-partitions \(R\) of length \(2p - k\) such that \(b(R) = 2n\).
We have \(c_k^{(n)} = 0 = C_k^{(n)}\) if \(k < nd\) or \(n > N\) where \(N = \min(\floor{p/a}, \floor{p/b})\).
We also have \(c_0^{(0)} = (2p - 1)!!^2\) and \(C_0^{(0)} = (2p - 1)!!^2m^{2p}\) where
\[(2p - 1)!! = \frac{(2p)!}{p!2^p} = (2p - 1)(2p - 3) \cdots 3 \cdot 1\]
is the double factorial \(2p - 1\), the number of ways to arrange \(2p\) objects into \(p\) unordered pairs.

\medskip

We have therefore
\begin{align}\label{Exi}
  E(\xi) &\le \sum_{e=0}^N \sum_{k=ed}^{2p-1} C_k^{(e)} \\
         &= (2p - 1)!!^2m^{2p} + \sum_{k=1}^{d-1} C_k^{(0)} + \sum_{e=1}^{N-1}\sum_{k=ed}^{(e+1)d-1} \sum_{n=0}^e C_k^{(n)}
           + \sum_{k=Nd}^{2p-1} \sum_{n=0}^N C_k^{(n)}. \nonumber
\end{align}
So, to get an upper bound for \(E(\xi)\), we will study \(\sum_{n=0}^e C_k^{(n)}\).
Since, by lemma \ref{lemme_maj_E_etoile}, we have \(C_k^{(n)} \le c_k^{(n)}2^{2k+2}(2k + 1)!m^{2p-k}\),
it suffices to study \(c_k^{(n)}\).

\begin{lemme}\label{lemme_maj_ck}
  Let \(a\), \(b\), \(p\) and \(d\) be as above.
  \begin{enumerate}
  \item[(a)] For all integer \(k \in \llb 1, 2p-1 \rrb\), we have
    \[c_k^{(0)} \le 2^{k-2}(2p)(2p - 1)p^{2k-2}(2p -1)!!^2.\]
  \item[(b)] If \(n\) is a positive integer such that \(n \le N\), then
    \[c_{nd}^{(n)} \le 2^n p(p - 1)p^{n(d+2)-2}(2p - 1)!!^2.\]
  \item[(c)] For all non-negative integers \(n\) and \(j\), we have
    \[ c_{nd+j}^{(n)} \le 2^{n+j}p^{n(d+2)+2j}(2p - 1)!!^2.\]
  \end{enumerate}
\end{lemme}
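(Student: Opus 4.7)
The plan is to establish all three bounds using a single iterative argument: compute the base cases by direct enumeration of maximal-length partitions, then use Proposition \ref{lemme_part_reunion_blocs} as a merge recursion to descend to shorter partitions.

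For the base of part (a), I would observe that $c_0^{(0)}$ counts $\xi$-partitions of length $2p$ with $b(P) = 0$; by Lemma \ref{lemme_partition_adap}(a) each block must be of type $(2,0)$ or $(0,2)$, forcing exactly $p$ of each, so $c_0^{(0)} = (2p-1)!!^2$ (two independent perfect matchings). For part (b), which is itself a base case, I would count $c_{nd}^{(n)}$ directly: by Lemma \ref{lemme_bP_2n} the partition has exactly $2n$ blocks of type $(b,a)$, $p-nb$ of type $(2,0)$, and $p-na$ of type $(0,2)$, and enumerating these choices yields
\[c_{nd}^{(n)} = \frac{1}{(2n)!(a!)^{2n}(b!)^{2n}} \cdot \frac{(2p)!}{(2p-2nb)!}\,(2(p-nb)-1)!! \cdot \frac{(2p)!}{(2p-2na)!}\,(2(p-na)-1)!!.\]
The crucial identity $\frac{(2p)!}{(2p-2nb)!}(2(p-nb)-1)!! = 2^{nb}\frac{p!}{(p-nb)!}(2p-1)!!$, obtained by separating the even and odd entries of the consecutive product $(2p)(2p-1)\cdots(2p-2nb+1)$, cleanly isolates a factor of $(2p-1)!!$ and replaces $(2p)^{nb}$-scale quantities by $p^{nb}$-scale ones. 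Applied to both halves, it reduces the task to the inequality
\[\frac{2^{n(a+b)}}{(2n)!(a!)^{2n}(b!)^{2n}} \cdot \frac{p!}{(p-nb)!}\cdot\frac{p!}{(p-na)!} \leq 2^n p(p-1) p^{n(a+b)-2},\]
which follows by pulling $p(p-1)$ out of whichever of the two falling factorials has at least two factors, bounding the rest by powers of $p$, and using $a! \geq 2^{a-1}$, $b! \geq 2^{b-1}$ to ensure $(a!)^{2n}(b!)^{2n} \geq 2^{2n(a+b-2)}$ absorbs the residual factor $2^{n(a+b-1)}$; this holds as long as $a + b \geq 3$, and the only exception $a = b = 1$ is ruled out by $a$ odd together with $\gcd(a,b)=1$.

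For the merge step, Proposition \ref{lemme_part_reunion_blocs} gives $c_k^{(n)} \leq c_{k-1}^{(n)}\binom{2p-k+1}{2}$ whenever $k > nd$, since every length-$(2p-k)$ partition with $b$-invariant $2n$ arises from one of length $2p-k+1$ by merging one of $\binom{2p-k+1}{2}$ pairs of blocks. For part (a), iterating $k$ times from $c_0^{(0)}$ while using $\binom{2p}{2} = p(2p-1)$ at the first step and $\binom{2p-i+1}{2} \leq (2p-1)(p-1) \leq 2p^2$ thereafter produces the product $p(2p-1)(2p^2)^{k-1}(2p-1)!!^2 = 2^{k-2}(2p)(2p-1)p^{2k-2}(2p-1)!!^2$. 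For part (c), iterating $j$ times from the bound of part (b) when $n \geq 1$ (or from $c_0^{(0)}$ when $n=0$) multiplies in $(2p^2)^j = 2^j p^{2j}$; weakening $p(p-1) \leq p^2$ in the bound from (b) gives the cleaner exponent stated in (c).

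The main obstacle is the counting in part (b): a crude bound on the enumeration overshoots by roughly a factor of $2^{n(a+b)}$, and only the precise even/odd-factor identity combined with the denominator bounds $a! \geq 2^{a-1}$, $b! \geq 2^{b-1}$ are sharp enough to absorb this excess into the single factor $2^n$ that the statement allows.
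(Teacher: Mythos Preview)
Your approach is essentially identical to the paper's: both compute the base cases $c_0^{(0)}$ and $c_{nd}^{(n)}$ by direct enumeration via Lemma~\ref{lemme_bP_2n}, then iterate the merge inequality $c_k^{(n)}\le\binom{2p-k+1}{2}c_{k-1}^{(n)}$ coming from Proposition~\ref{lemme_part_reunion_blocs}. Your explicit count for $c_{nd}^{(n)}$ is in fact sharper than the paper's (which omits the $1/(2n)!$ and thus overcounts, though harmlessly for an upper bound). One small slip: the case $a=b=1$ is \emph{not} excluded by ``$a$ odd and $\gcd(a,b)=1$''; it is excluded because $a$ and $b$ are distinct in the ambient setting of Proposition~\ref{prop_P(Cu_cap_Cv)}, so $a+b\ge 3$ holds for that reason.
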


\begin{proof}
  (a) By the proposition \ref{lemme_part_reunion_blocs}, we have \(c_1^{(0)} \le \frac{1}{2}(2p)(2p - 1)(2p - 1)!!^2\) and, by induction,
  \[c_{k+1}^{(0)} \le \mybinom{2p - k}{2}c_k^{(0)} \le 2^{k-1}(2p)(2p - 1)p^{2k}(2p -1)!!^2.\]

  (b) By lemma \ref{lemme_bP_2n}, a \(\xi\)-partition \(P\) of length \(2p - nd\) such that \(b(P) = 2n\)
  consists of exactly \(2n\) blocks of type \((b, a)\), \(p - nb\) blocks of type \((2,0)\) and \(p - na\) blocks of type \((0,2)\).
  Hence, we have
  \begin{align*}
    c_{nd}^{(n)} &= \prod_{i=0}^{2n-1} \mybinom{2p - ia}{a}\mybinom{2p - ib}{b}(2p - 2nb - 1)!!(2p - 2na - 1)!!	\\
                 &= \Bigl( \frac{2^{a+b}}{a!^2b!^2} \Bigr)^n \prod_{j=0}^{an-1} (p - j) \prod_{k=0}^{bn-1} (p - k) \cdot (2p - 1)!!^2 \\
                 &\le 2^n p(p - 1)p^{n(a + b)-2}(2p - 1))!!^2.
  \end{align*}
  
  (c) By (b), the formula is true for \(j = 0\).
  By the proposition \ref{lemme_part_reunion_blocs}, we have \(c_{nd+j+1}^{(n)} \le \mybinom{2p - nd - j}{2} c_{nd+j}^{(n)}.\)
  So, by induction, we obtain
  \begin{align*}
    \frac{1}{(2p - 1)!!^2} c_{nd+j+1}^{(n)} &\le \frac{(2p - nd - j)(2p - nd - j - 1)}{2} 2^{n+j}p^{n(d + 2)+2j} \\
                                            &\le 2^{n+j+1}p^{n(d+2)+2j+2}. \qedhere
  \end{align*}
\end{proof}

We will now find an upper bound for the sum of \(c_k^{(n)}\).

\begin{lemme}\label{lemme_maj_sum_cnk}
  Let \(a\), \(b\), \(p\) and \(d\) be as above.
  For all \(e \in \llb 0, N \rrb\) and all \(k \in \llb ed, 2p - 1 \rrb\), we have
  \[ \sum_{n=0}^e c_k^{(n)} \le 2^{k+1} p^{3k}(2p - 1)!!^2. \]
\end{lemme}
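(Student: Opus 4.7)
The plan is to derive the bound directly from Lemma \ref{lemme_maj_ck}(c). Writing $k = nd + j$ with $j = k - nd$, the range hypothesis $k \in \llb ed, 2p - 1 \rrb$ together with $n \leq e$ guarantees $j \geq 0$, so Lemma \ref{lemme_maj_ck}(c) applies and, after the substitution, reads
\[
c_k^{(n)} \leq 2^{k - n(d-1)} p^{2k - n(d-2)}(2p - 1)!!^2.
\]
Factoring out $2^k p^{2k}(2p-1)!!^2$, the target inequality reduces to verifying
\[
\sum_{n=0}^{e} 2^{-n(d-1)} p^{-n(d-2)} \leq 2 p^k.
\]

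Since $a$ and $b$ are distinct positive integers and $a$ is odd, one has $a + b \geq 3$ and hence $d \geq 1$, so I would split the argument into two cases. When $d \geq 2$, both exponents $-(d-1)$ and $-(d-2)$ are nonpositive, each summand is at most $1$, and the sum is controlled by $e + 1 \leq N + 1 \leq p + 1 \leq 2 p^k$ for $k \geq 1$ (the subcase $k = 0$ forces $e = 0$ and is trivial). When $d = 1$, i.e. $(a, b) = (1, 2)$, the summand equals $p^n$; here one invokes the hypothesis $k \geq ed$, which collapses to $e \leq k$, and the geometric sum obeys $\sum_{n=0}^{e} p^n \leq 2 p^{e} \leq 2 p^k$ for $p \geq 2$ (the residual edge case $p = 1$ forces $k \leq 1$ and is checked by hand).

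The main obstacle is precisely the case $d = 1$: the per-term estimate of Lemma \ref{lemme_maj_ck}(c) is not monotone in $n$, in fact it \emph{grows}, so one cannot simply bound each summand by its value at $n = 0$. The saving comes from the range restriction $k \geq ed$ built into the statement of the lemma, which forces $e \leq k$ and allows the geometric blow-up of the summand to be absorbed by the factor $p^k$ on the right-hand side. Once this interplay is spotted, the proof reduces to a single application of Lemma \ref{lemme_maj_ck}(c) followed by elementary bookkeeping on the two cases $d \geq 2$ and $d = 1$.
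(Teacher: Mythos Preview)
Your argument is correct and follows the same overall strategy as the paper: apply Lemma~\ref{lemme_maj_ck} term by term, factor out $2^k p^{2k}(2p-1)!!^2$, and split into the cases $d\ge 2$ and $d=1$, the latter being the delicate one because the per-term bound grows in $n$ and is only rescued by the range constraint $e\le k$.

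The difference is one of economy. The paper invokes all three parts of Lemma~\ref{lemme_maj_ck}: part~(a) for the $n=0$ term, part~(c) for $1\le n\le e$ when $d>1$ (summed as a genuine geometric series in $(2^{d-1}p^{d-2})^{-n}$), and part~(b) for the boundary term $n=k$ when $d=1$, leading to somewhat heavier algebra. You instead use only part~(c) uniformly for every $n\in\llb 0,e\rrb$, and in the case $d\ge 2$ you forgo the geometric summation in favour of the crude bound ``each summand is at most $1$, and there are at most $e+1\le N+1\le p+1\le 2p^k$ of them.'' This loses nothing for the purpose at hand and makes the proof noticeably shorter; the paper's use of parts~(a) and~(b) buys slightly sharper constants that are never actually needed for the stated inequality.
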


\begin{proof}
  The lemma is trivial if \(e = k = 0\) and it follows from lemma \ref{lemme_maj_ck}(a) if \(e = 0\) and \(k \ge 1\).
  Assume that \(e \ge 1\).
  We first consider the case where \(d > 1\).
  For all \(k \in \llb ed, 2p - 1 \rrb\) and all \(n \in \llb 1, e \rrb\), we have
  \[ \frac{1}{(2p - 1)!!^2}c_k^{(n)} \le 2^{ed+j}p^{2ed+2j} (2^{d-1}p^{d-2})^{-n} \]
  by lemma \ref{lemme_maj_ck}(c).
  Hence, we have
  \begin{align*}
    \frac{1}{(2p - 1)!!^2}\sum_{n=0}^e c_k^{(n)} \le 2^{k-2}(2p)(2p - 1)p^{2k-2} + 2^{e+j}p^{(d+2)e+2j}
    \frac{2^{e(d-1)}p^{e(d-2)} - 1}{2^{d-1}p^{d-2} - 1}.
  \end{align*}
  To prove the lemma in this case, it suffices to prove that the right hand side is lower or equal to \(2^{k+1}p^{2k} - 2^{k-1}p^{2k-1}\).
  It follows from the trivial inequality \(2^{d-1}p^{d-2} \ge 2\).

  \medskip

  Assume now that \(d = 1\).
  Let  \(k \in \llb e, 2p-1 \rrb\).
  By lemmas \ref{lemme_maj_ck} (b) and (c), we have
  \begin{align*}
    \frac{1}{(2p - 1)!!^2} \sum_{n=0}^e c_k^{(n)} &\le \sum_{n=0}^{k-1} 2^{n+(k-n)}p^{3n+2(k-n)} + 2^k p(p - 1)p^{3k-2} \\
                                                  &\le 2^k p^{2k} \frac{p^{k} - 1}{p-1} + 2^{k+1} p(p - 1)p^{3k-2}.
  \end{align*}
  We check that the right hand side is lower or equal to \(2^{k+1}p^{3k}\).
\end{proof}

\begin{prop}\label{lemme_maj_Exi}
  For all \(m\) sufficiently large and all distinct elements \(a\) and \(b\) of \(\Fm\) such that \(a + b < \frac{m}{2\log m}\),
  we have
  \[ E(\xi) \le 2(2p-1)!!^2 m^{2p} \]
  where \(p = \floor{\log m}\) and \(\xi\) is the sequence \((a, \ldots, a, b, \ldots, b)\) of \(\Fm\) of length \(4p\)
  with \(2p\) times \(a\).
\end{prop}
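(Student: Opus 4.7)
The plan is to take equation (\ref{Exi}) as a starting point and bound its tail by combining two already-proved ingredients: (a) the bound $E(P) \le 2^{2k+2}(2k+1)!\, m^{2p-k}$ for any $\xi$-partition $P$ of length $2p-k$ (the unnamed proposition of Section 4.2), and (b) Lemma \ref{lemme_maj_sum_cnk}, which bounds $\sum_{n=0}^{e} c_k^{(n)}$ by $2^{k+1} p^{3k}(2p-1)!!^2$. The leading term of (\ref{Exi}) is exactly $C_0^{(0)} = (2p-1)!!^2 m^{2p}$, so the task reduces to showing that all remaining terms together contribute at most one further copy of $(2p-1)!!^2 m^{2p}$.

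All of Subsections 4.2--4.4 run under the standing assumptions that $a$ is odd, $\gcd(a,b)=1$, and $2p(a+b)<m$, whereas the proposition assumes only $a+b < m/(2\log m)$ and that $a,b$ are distinct. I would first reduce to these hypotheses. If $g := \gcd(a,b) > 1$, then $g$ is invertible modulo the prime $m$, and applying Lemma \ref{lemme_prop_E} with $c = g^{-1}\bmod m$ preserves $E(\xi)$ while replacing $(a,b)$ by $(a/g, b/g)$, whose representatives in $\{1,\dots,m-1\}$ have smaller sum. After this reduction $a$ and $b$ are coprime, so at least one is odd; permuting the blocks of $\xi$ (again by Lemma \ref{lemme_prop_E}) lets me assume it is $a$. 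Since $p = \lfloor \log m\rfloor \le \log m$, the hypothesis $a+b < m/(2\log m)$ gives $2p(a+b) < m$.

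Next I would rewrite (\ref{Exi}) by grouping terms by $k$:
\[
  E(\xi) \le (2p-1)!!^2 m^{2p} + \sum_{k=1}^{2p-1} \sum_{n} C_k^{(n)},
\]
the inner sum running over admissible $n$. Ingredient (a) gives $C_k^{(n)} \le c_k^{(n)} \cdot 2^{2k+2}(2k+1)!\, m^{2p-k}$, and then ingredient (b) yields
\[
  \sum_{n} C_k^{(n)} \le 2^{3k+3}(2k+1)!\, p^{3k}\,(2p-1)!!^2\, m^{2p-k}.
\]
Dividing through by $(2p-1)!!^2 m^{2p}$, it then suffices to show
\[
  \sum_{k=1}^{2p-1} \frac{2^{3k+3}(2k+1)!\, p^{3k}}{m^k} \le 1
\]
for all $m$ sufficiently large.

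The hard part is this last tail bound, although it is essentially bookkeeping. Writing $a_k$ for the $k$th summand, the ratio $a_{k+1}/a_k = 8(2k+3)(2k+2)p^3/m$ is $O((\log m)^5/m)$ uniformly in $k \le 2p-1$ (since $p \le \log m$), so for $m$ sufficiently large it is at most $1/2$. The sum is then dominated by a geometric series with first term $a_1 = 384\, p^3/m$, giving $\sum_k a_k \le 2a_1 = O((\log m)^3/m)$, which is comfortably below $1$. The only subtlety worth checking carefully is that this ratio really is uniformly bounded throughout the admissible range of $k$, and this is immediate from $p = \lfloor \log m\rfloor$.
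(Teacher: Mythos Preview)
Your proof is correct and follows essentially the same approach as the paper's: both invoke Lemma~\ref{lemme_prop_E} to reduce to coprime $a,b$ with $a$ odd, regroup (\ref{Exi}) by $k$, combine the partition bound $E(P)\le 2^{2k+2}(2k+1)!\,m^{2p-k}$ with Lemma~\ref{lemme_maj_sum_cnk}, and then show the resulting tail sum is $o(1)$ relative to $(2p-1)!!^2 m^{2p}$. The only difference is cosmetic: the paper first absorbs $(2k+1)!$ into a power of $p$ (obtaining $2^{7k+5}p^{5k+1}m^{-k}$) and bounds the tail by the number of terms times the largest one, whereas you keep $(2k+1)!$ and use a ratio test to sum the geometric series; your version is arguably cleaner but the underlying estimate is the same.
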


\begin{proof}
  By lemma \ref{lemme_prop_E}, we can assume that \(a\) and \(b\) are coprime integers and that \(a\) is odd.
  By (\ref{Exi}), we have
  \[
    E(\xi) \le (2p - 1)!!^2m^{2p} + \sum_{k=1}^{d-1} C_k^{(0)} + \sum_{e=1}^{N-1}\sum_{k=ed}^{(e+1)d-1} \sum_{n=0}^e C_k^{(n)}
    + \sum_{k=Nd}^{2p-1} \sum_{n=0}^N C_k^{(n)}.
  \]
  By lemmas \ref{lemme_maj_E_etoile} and \ref{lemme_maj_sum_cnk}, we have
  \[
    \sum_{n=0}^e C_k^{(n)} \le 2^{2k+2}(2k + 1)! m^{2p-k} \sum_{n=0}^e c_k^{(n)} \le 2^{7k+5} p^{5k+1}m^{2p-k}(2p - 1)!!^2.
  \]
  As the right hand is lower or equal to
  \[\Bigl(1 + 2^{12}p^6m^{-1} + 2^{19}p(2p - 2)(p^5m^{-1})^2 \Bigr) (2p-1)!!^2 m^{2p}\]
  for all integer \(m\) big enough, we can conclude.
\end{proof}

\subsection{Proof of proposition \ref{prop_P(Cu_cap_Cv)}.}

We can now prove the proposition \ref{prop_P(Cu_cap_Cv)}.
Using (\ref{eqP}) and lemma \ref{lemme_maj_Exi}, for all \(m\) sufficiently large and all elements \(u\) and \(v\) of \(\Fm^*\)
  such that \(u\) and \(v\) are coprime and \(u + v < \frac{m}{2\log m}\), we have
\[
  P( \abs{C_u(S_m)} \ge \theta_1 \cap \abs{C_v(S_m)} \ge \theta_2 )
  \le \frac{1}{(\theta_1\theta_2)^{2p}}E(\xi)
  \le \frac{2}{(\theta_1\theta_2)^{2p}}(2p-1)!!^2 m^{2p}
\]
where \(p = \floor{\log m}\) and \(\xi = (a_1, \ldots,  a_{4p})\) is the sequence of \(\F_m\)
such that \(a_1 = \cdots = a_{2p}=u\) and \( a_{2p+1} = \cdots = a_{4p} = v\).
If we take \(\theta_1 = \theta_2 = \lambda_m\), then we have
\[
  P( \abs{C_u(S_m)} \ge \theta_1 \cap \abs{C_v(S_m)} \ge \theta_2) \le 2\frac{(2p - 1)!!^2}{2^{2p}\log^{2p} m}.
\]
We deduce from Stirling's approximation that \(\frac{(2p - 1)!!^2}{2^{2p}\log^{2p} m} \le \frac{3e^2}{m^2}\).
So, for all \(m\) sufficiently large, we have \(P( \abs{C_u(S_m)} \ge \theta_1 \cap \abs{C_v(S_m)} \ge \theta_2) \le \frac{6e^2}{m^2}\).

\bibliographystyle{amsplain}

\end{document}